\newtheorem{definition}{Definition}
\newtheorem{theorem}{Theorem}
\newtheorem{proof}{Proof}
\newtheorem{lemma}{Lemma}
\begin{document}
\newcommand{\ket}[1]{\ensuremath{\left|#1\right\rangle}}
\newcommand{\bra}[1]{\ensuremath{\left\langle#1\right|}}
\newcommand\floor[1]{\lfloor#1\rfloor}
\newcommand\ceil[1]{\lceil#1\rceil}
\newcommand{\tr}{\mathrm{Tr}}
\title{Quantum secure two party computation for set intersection with rational players}
\author{Arpita Maitra}
\affiliation{Centre for Theoretical Studies, Indian Institute of Technology Kharagpur, \\
Kharagpur 721302, West Bengal, India.\\
	\email{arpita76b@gmail.com}}

\date{Received: date / Accepted: date}

\begin{abstract}
Recently, Shi et al. (Phys. Rev. A, 2015) proposed Quantum Oblivious Set Member Decision Protocol (QOSMDP) where two legitimate parties, namely Alice and Bob, play a game. Alice has a secret $k$ and Bob has a set $\{k_1,k_2,\cdots k_n\}$. The game is designed towards testing if the secret $k$ is a member of the set possessed by Bob without revealing the identity of $k$. The output of the game will be either ``Yes'' (bit $1$) or ``No'' (bit $0$) and is generated at Bob's place. Bob does not know the identity of $k$ and Alice does not know any element of the set. In a subsequent work (Quant. Inf. Process., 2016), the authors proposed a quantum scheme for Private Set Intersection (PSI) where the client (Alice) gets the intersected elements with the help of a server (Bob) and the server knows nothing. In the present draft, we extended the game to compute the intersection of two computationally indistinguishable sets $X$ and $Y$ possessed by Alice and Bob respectively. We consider Alice and Bob as rational players, i.e., they are neither ``good'' nor ``bad''. They participate in the game towards maximizing their utilities. We prove that in this rational setting, the strategy profile $((cooperate, abort), (cooperate, abort)$) is a strict Nash equilibrium. If $((cooperate, abort), (cooperate, abort)$) is strict Nash, then fairness as well as correctness of the protocol are guaranteed.
\end{abstract}
\maketitle
\noindent{\bf Keywords:} Set Intersection;  Quantum Secure Computation; Rationality; Nash Equilibrium 

\section{Introduction}
\label{intro}
Secure Multiparty Computation (SMC)~\cite{Gordon,AL11,GroceK}  is an important primitive in cryptology. It  has wide applications in electronic voting, cloud computing, online auction etc. Recent trend of the theoretical research in this direction is to combine game theory with cryptology. 

Cryptography deals with `worst case' scenario making the protocols secure against various type of adversarial behaviours. Those are characterized as semi honest, malicious and covert adversarial models.  

In game theory, a protocol is designed against the rational deviation of a party. Rational parties are neither completely ``malicious'' nor they are fully ``honest''. They participate in the game in the motivation towards maximizing their utilities. So placing cryptographic protocols in rational setting empowers more flexibility to the adversary. It seems more practical as in real world most of the people prefer to be rational rather being completely ``good'' or ``bad''.    

 Recently, Brunner and Linden~\cite{BL} showed a deep link between quantum physics and game theory.  They showed that if the players use quantum resources, such as entangled quantum particles, they can outperform the classical players. In~\cite{maitra} the concept of rationality has been introduced in quantum secret sharing. In the present draft, we incorporate the idea of rationality in Secure Computation for Set Intersection (SCSI). 
 
 In classical domain this SCSI problem has been studied extensively~\cite{HZ1,HZ2,JN10}. It has various applications in dating services, data-mining, recommendation systems, law enforcement etc. 
 
 In SCSI, two parties, Alice and Bob, hold two sets $X$ and $Y$ respectively. The sets are assumed to be computationally indistinguishable from each other. Alice and Bob exchange some informations between themselves so that at the end of the protocol, either Alice or Bob (suggested by the protocol) gets $X \cap Y$. 
 
 However, the hardness assumptions that Diffie--Hellman (DDH) problem~\cite{Diffi-Hellman}, discret logarithm (DL) problem~\cite{Stingson} are computationally hard, have been proven to be vulnerable in quantum domain~\cite{Shor}.
 
 In quantum paradigm, Shi et al.~\cite{Shi} proposed a variant of this problem and named it as {\em {Quantum Oblivious Set Member Decision Protocol}} (QOSMDP). According to the protocol, Alice has a secret element $k$ and Bob holds a set $\{k_1,k_2.\cdots,k_n\}$ of $n$ elements. Now, Bob wants to know if the secret $k$ of Alice is the member of his set. However, Alice does not allow Bob to identify that element. Simultaneously, Bob resists Alice to know a single element except $k$, if it is in the set, of the set. 
 
 The authors of~\cite{Shi} commented that it can be exploited to compute the cardinality of the set intersection or union which is the direct consequence of the protocol. Even in~\cite{Shi1}, the authors suggested a quantum scheme for Private Set Intersection (PSI) where the client (Alice) gets $X\cap Y$ with the help of a server (Bob) and the server knows nothing. They establish the security of their protocol in ``honest but curious'' adversarial model. 
 
 Contrary to this, in the present draft, we exploit the idea to compute the set intersection in rational setting where the players are trying to maximize their utilities. We show that the strategy profile $((cooperate, abort), (cooperate, abort))$ achieves strict Nash equilibrium in this initiative. We also show that if ((cooperate, abort), (cooperate, abort)) is a strict Nash, then fairness as well as correctness of the protocol are obvious. 

In this regard, we like to point out that the procedure described in~\cite{Shi,Shi1} to detect {\em{measure and resend}} attack by Bob, requires a more detailed analysis and certain revision. In~\cite{Shi}, to detect the attack, the author inserted $l-1$ decoy states encoded as $\frac{1}{\sqrt{2}}(\ket{0}+\ket{j_i})$, where $j_i\in Z_{N}^*$. Each $j_i$ is represented by $\log_2 N$ bits. In~\cite{Shi1}, the same type of encoding is exploited to mask the set elements of the client's set. In both the papers, it is commented that if Bob tries to measure those states, he will introduce noise in the channel which can be detected by Alice by measuring the states in $\{\frac{1}{\sqrt{2}}(\ket{0}+\ket{j_i}),\frac{1}{\sqrt{2}}(\ket{0}-\ket{j_i})\}$ basis. However, the oracle $O_t$ maps the state $\frac{1}{\sqrt{2}}(\ket{0}+\ket{j_i})$ into $\frac{1}{\sqrt{2}}(\ket{0}+(-1)^{q_t(j_i)}\ket{j_i})$. Based on the value of $q_t(j_i)\in \{0,1\}$ (not known to Alice), the state will be either $\frac{1}{\sqrt{2}}(\ket{0}+\ket{j_i})$ or $\frac{1}{\sqrt{2}}(\ket{0}-\ket{j_i})$. Thus, it requires further clarification how Alice can distinguish the attack from the application of the oracle by measuring the registers in $\{\frac{1}{\sqrt{2}}(\ket{0}+\ket{j_i}), \frac{1}{\sqrt{2}}(\ket{0}-\ket{j_i})\}$ basis. To avoid such a security related issue, we modify the protocol accordingly. 

\section{Preliminaries}
In this section, we discuss the concepts of computational indistinguishability of two distribution ensembles, functionality, rationality, fairness, correctness and equilibrium used in this work. 

\subsection{Computational Indistinguishability} In communication complexity, two distribution ensembles $X=\{X(a,\lambda)\}_{a\in\{0,1\}^*}$ and $Y=\{Y(a,\lambda)\}_{a\in\{0,1\}^*}$ (where $\lambda$ is the security parameter which usually refers to the length of the input), are computationally indistinguishable if for any non-uniform probabilistic polynomial time algorithm $D$, the following quantity is a negligible function in $\lambda$:

$$\delta (\lambda)=\left|\Pr _{a\gets X(a,\lambda)}[D(a)=1]-\Pr _{a\gets Y(a,\lambda)}[D(a)=1]\right|$$ for every $\lambda\in N$.

In other words, two ensembles are computationally indistinguishable implies that those can not be distinguished by polynomial-time algorithms looking at multiple samples taken from those ensembles. 
\subsection{Functionality}
In classical domain and in two party setting, a functionality $\mathcal{F}=\{f_\lambda\}_{\lambda\in\mathbb{N}}$ is a sequence of randomized processes, where $\lambda$ is the 
security parameter and $f_\lambda$ maps pairs of inputs to pairs of outputs (one for each party). Explicitly, we can write 
$f_\lambda=(f_\lambda^{1}, f_\lambda^{2})$, where $f_\lambda^{1}$ represents the output of the first party, say Alice. Similarly, $f_\lambda^{2}$ represents the output of the second 
party, say Bob. The domain of $f_\lambda$ is $X_\lambda \times Y_\lambda$, where $X_\lambda$ (resp. $Y_\lambda$) denotes the possible inputs of the first (resp. second) party. 
If $|X_\lambda|$ and $|Y_\lambda|$ are polynomial in $\lambda$, then we say that $\mathcal{F}$ is defined over polynomial size domains. If each $f_\lambda$ is deterministic, we say 
that each $f_\lambda$ as well as the collection $\mathcal{F}$ is a function~\cite{GK11}.

\subsection{Rationality}
\label{rat}
Rationality of a player is defined over its utility function $U\in \{u_1,u_2,\cdots,u_n\}$ and its preferences. Each $u_i$, $i \in \{0,1,\cdots,n\}$ is associated with the possible outcomes of the game. The outcomes and corresponding utilities for $2$ players' set intersection game are described in Table~\ref{table: OutcomesRSS}. 

$f_A$ (resp. $f_B$) represents the functionality generated at the place of Alice (resp. Bob) and $U_A$ (resp. $U_B$) represents the utility function of Alice (resp. Bob). Let $\mathcal{F}=X\cap Y$ and $\perp=\emptyset$

\begin{table}[htbp]
\caption{}
\label{table: OutcomesRSS}
\begin{center}
\begin{tabular}{llll}
\hline\noalign{\smallskip}
$f_A$ & $f_B$ & $U_A(f_A, f_B)$ & $U_B(f_A, f_B)$\\
\hline
\noalign{\smallskip}
$f_A=\mathcal{F}$ & $f_B$=$\mathcal{F}$ & $U_A^{TT}$ & $U_B^{TT}$\\ \\

$f_A=\perp$ & $f_B=\perp$ & $U_A^{NN}$ & $U_B^{NN}$\\ \\

$f_A=\mathcal{F}$ & $f_B=\perp$ & $U_A^{TN}$ & $U_B^{NT}$\\ \\

$f_A=\perp$ & $f_B=\mathcal{F}$ & $U_A^{NT}$ & $U_B^{TN}$\\ \\
\hline
\end{tabular}
\end{center} 
\end{table}

Here, $TT$, $TN$, $NT$, $NN$ imply 
\begin{itemize}
\item both Alice and Bob obtain ``True'' output, i.e., correct and complete values of $\mathcal{F}$. 
\item Alice (resp. Bob) obtains ``True'' output where Bob (resp. Alice) obtains ``Null'' output, 
\item Alice (resp. Bob) obtains ``Null'' output but Bob (resp. Alice) gets ``True'' output, 
 \item both obtain ``Null'' output
\end{itemize}
respectively.
 In this work, we assume that Alice (resp. Bob) has the following order of preferences.
$$\mathcal{R}_1 : U^{TN}> U^{TT}>U^{NN}>U^{NT}.$$ 
That is they prefer to compute the true value of the functionality by herself or himself alone than to compute the true value by both. However, they find it better to compute a null value at both of their ends than to compute a null value by himself or herself when the opponent gets a true value.

Here, one should emphasize that each rational party is only interested to get the complete value of functionality $\mathcal{F}$.

\subsection{Fairness}
A rational player, being selfish, desires an unfair outcome, i.e., he or she always tries to compute the true value of the functionality by himself or herself alone. Therefore, the basic aim of a game when the players are rational should be to achieve fairness. 

A formal definition of fairness in the context of a
(2,2) Rational Secret Sharing (RSS) protocol was presented by Asharov and Lindell~\cite{AL}. We modify this definition accordingly for our present setting.
\begin{definition} Let $\sigma$ be the strategy suggested by the protocol and $\sigma'$ be any deviated strategy.
Suppose, Alice has a strategy profile $(\sigma_A,\sigma'_A)$. Similarly Bob has a strategy profile $(\sigma_B,\sigma'_B)$. A game is said to be completely fair if for every arbitrary alternative strategy $\sigma'_A$ followed by Alice, the following holds:
\begin{eqnarray*}
\Pr[f_A=\mathcal{F}|A=\sigma'_A, B=\sigma_B] \\
< \Pr[f_A=\mathcal{F}|A=\sigma_A, B=\sigma_B].
\end{eqnarray*}
\end{definition}
Here $A$ (resp. $B$) implies the event that Alice (resp. Bob) follows a strategy. 

Similarly, for Bob we can write
\begin{eqnarray*}
\Pr[f_B=\mathcal{F}|A=\sigma_A, B=\sigma'_B] \\
< \Pr[f_B=\mathcal{F}|A=\sigma_A, B=\sigma_B].
\end{eqnarray*}

In terms of utility function, a game achieves fairness if and only if for a party, the following holds:
\begin{eqnarray*}
U^{TT} \geq E[U (\mathcal{O}_i)], 
\end{eqnarray*}
where, E(U) is the expected utility value of the player for the input $i$, $i\in\{1,\cdots,n\}$ and $\mathcal{O}_i$ is the corresponding outcome.

\subsection{Correctness}\label{correctness}
A formal definition of correctness in the context of a
(2,2) RSS protocol was presented by Asharov and Lindell~\cite{AL}. We modify this definition for the setting as follows:
\begin{definition}
(Correctness): Let $\sigma$ be the strategy suggested by the protocol and $\sigma'$ be any deviated strategy. Let Alice has a strategy profile $(\sigma_A,\sigma'_A)$. Similarly Bob has a strategy profile $(\sigma_B,\sigma'_B)$. A game is said to be correct if for every arbitrary alternative strategy $\sigma'_B$ followed by Bob, the following holds:
\begin{eqnarray*}
\Pr[f_A\not\in \{\mathcal{F},\perp\}|A=\sigma_A, B=\sigma'_B]= 0
\end{eqnarray*}
\end{definition}
Here $A$ (resp. $B$) implies the event that Alice (resp. Bob) follows a strategy. 

Similarly, for Bob we can write
\begin{eqnarray*}
\Pr[f_B\not\in \{\mathcal{F},\perp\}|A=\sigma'_A, B=\sigma_B]=0
\end{eqnarray*}

\subsection{Equilibrium}
  Let $\Gamma$ be a mechanism designed for $n$ players for a certain purpose. Let $\overrightarrow{\sigma}$ be the set of suggested strategies for that $n$ number of players in the mechanism $\Gamma$. That is $\overrightarrow{\sigma}=\{\sigma_1,\sigma_2,\cdots,\sigma_n\}$, where $\sigma_i$ is the suggested strategy for a player $P_i$, $i\in\{1,2,\cdots,n\}$. Then  $\overrightarrow{\sigma}$ in the mechanism $(\Gamma,\overrightarrow\sigma)$ is said to be in Nash equilibrium when there is no incentive for a player $P_i$, $i\in\{1,2,\cdots,n\}$ to deviate from the suggested strategy, given that everyone else is following his or her strategy. Thus we can define 
Strict Nash Equilibrium as follows.
\begin{definition}
(Strict Nash Equilibrium) The suggested strategy $\overrightarrow{\sigma}$ in the mechanism $(\Gamma,\overrightarrow{\sigma})$ is a strict Nash equilibrium if for every $P_i$ and for any strategy $\sigma'_i$, we have $u_i (\sigma'_i,\overrightarrow{\sigma}_{-i} )<u_i (\overrightarrow{\sigma})$.
\end{definition}
Here, $\overrightarrow{\sigma}_{-i}=\{\sigma_1,\sigma_2,\cdots,\sigma_{i-1},\sigma_{i+1},\sigma_n\}$, i.e., the set of the suggested strategies for the players excluding $i$-th player.

Explicitly, a mechanism is in strict Nash equilibrium when the payoff achieved by a player following the suggested strategy will be more than the payoff achieved by the player following any deviated strategy conditional on the event that all other players follow the suggested strategies.
\section{Revisiting the protocol in~\cite{Shi}}
In~\cite{Shi} the protocol for set member decision problem is described as follows.
Alice has a secret $k$ and Bob possesses a set $Y=\{k_1,k_2,\cdots,k_n\}$ such that each $k_i$ belongs to the set $\mathbb{Z}_{N}^*=\{1,2,\cdots,N-1\}$. Now, Bob prepares an $N$ element database in a way so that the $j$-th element $p(j)=1$ if and only if $j=k_i(i\in[1,n])$ and $p(j)=0$ otherwise. He now selects $l$ bits $r_1,r_2,\cdots,r_l\in\{0,1\}$ uniformly at random and generates another variable $q_t(j)=p(j)\oplus r_t$ where $t$ varies from $1$ to $l$ and $j$ varies from $1$ to $N-1$. $l$ is the security parameter. Alice and Bob fix $p(0)=0$ and $q_1(0)=q_2(0)=\cdots=q_l(0)=0$ a priori. Table $2$ shows the $N-1$ elements database created from the set $\{k_1,k_2,\cdots,k_n\}$.

\begin{table*}[htbp]
\begin{center}
\begin{tabular}{|cccccc|}
\hline
 $j$ & $p(j)$ & $q_1(j)$ & $q_2(j)$ & $\cdots$ & $q_l(j)$\\
 \hline
1 & 0 & $0+r_1$ & $0+r_2$ & $\cdots$ & $0+r_l$\\
2 & 0 & $0+r_1$ & $0+r_2$ & $\cdots$ & $0+r_l$\\
3 & 0 & $0+r_1$ & $0+r_2$ & $\cdots$ & $0+r_l$\\
$\vdots$ & $\vdots$ & $\vdots$ & $\vdots$ & $\vdots$ & $\vdots$\\
$k_1$ & 1 & $1+r_1$ & $1+r_2$ & $\cdots$ & $1+r_l$\\
$k_1+1$ & $0$ & $0+r_1$ & $0+r_2$ & $\cdots$ & $0+r_l$\\
$\vdots$ & $\vdots$ & $\vdots$ & $\vdots$ & $\vdots$ & $\vdots$\\
$k_2$ & 1 & $1+r_1$ & $1+r_2$ & $\cdots$ & $1+r_l$\\
$k_2+1$ & $0$ & $0+r_1$ & $0+r_2$ & $\cdots$ & $0+r_l$\\
$\vdots$ & $\vdots$ & $\vdots$ & $\vdots$ & $\vdots$ & $\vdots$\\
$k_n$ & 1 & $1+r_1$ & $1+r_2$ & $\cdots$ & $1+r_l$\\
$k_n+1$ & $0$ & $0+r_1$ & $0+r_2$ & $\cdots$ & $0+r_l$\\
$\vdots$ & $\vdots$ & $\vdots$ & $\vdots$ & $\vdots$ & $\vdots$\\
$N-1$ & 0 & $0+r_1$ & $0+r_2$ & $\cdots$ & $0+r_l$\\
\hline
\end{tabular}
\end{center}
\caption{$N$ element database created from the set $\{k_1,k_2,\cdots,k_n\}$~\cite{Shi}}
\label{tab2}
\end{table*}

Alice now generates $l$ $M(=\log_2 N)$\footnote{For the brevity of notation, in the rest of the paper, we will write $\log(.)$ instead of $\log_2(.)$} qubit registers. One register contains the qubit $\frac{1}{\sqrt{2}}(\ket{0}+\ket{k})$ and the remaining $l-1$ registers contain the decoy states $\frac{1}{\sqrt{2}}(\ket{0}+\ket{j_1}), \frac{1}{\sqrt{2}}(\ket{0}+\ket{j_2}),\cdots,\frac{1}{\sqrt{2}}(\ket{0}+\ket{j_{l-1}})$ where $j_i\in \mathbb{Z}_N^*$. Here, $\ket{0}$ represents $\ket{0}^{\otimes M}$ and each $\ket{j}$ is an $M$ qubit string.\footnote{$M$ qubit string can be written as the tensor product of $M$ individual qubits. As $j\in \mathbb{Z_N^*}$, $j$ can be expressed in $M=\log N$ bits. Each bit corresponds to a qubit. Thus $\ket{j}$ can be written as $\ket{d_M}^{\otimes M}$, $d_M\in\{0,1\}$ and $M\in[1,\log N]$.}  Alice sends all these $l$ registers to Bob after a random permutation. She keeps the record of the permutation. Bob now operates an oracle $O_t$ on each register. 
The matrix representation of the oracle is as follows.
$$O_t=
\begin{bmatrix}
(-1)^{q_t(0)} & & & \\
 & (-1)^{q_t(1)} & & \\
&&\ddots &\\
 & &  &(-1)^{q_t(N-1)}
\end{bmatrix}
$$
The oracle transforms the $l$-th register $\frac{1}{\sqrt{2}}(\ket{0}+\ket{j})$ to $\frac{1}{\sqrt{2}}(\ket{0}+(-1)^{q_l(j)}\ket{j})$. Bob returns all those registers to Alice. After getting back the registers, Alice measures the decoy registers in $\{\frac{1}{\sqrt{2}}(\ket{0}+\ket{j_i}),\frac{1}{\sqrt{2}}(\ket{0}-\ket{j_i})\}$ basis as she knows the $j_i$ value associates with each register. If any error, which indicates the cheating of Bob, is found, Alice aborts the protocol. Otherwise she will proceed for the next step. 

In the second phase, Alice takes the register which contains $\frac{1}{\sqrt{2}}(\ket{0}+(-1)^{q_t(k)}\\
\ket{k})$ and operates a SWAP gate $U_{swap}$ on the $1$st and $i$-th $1$ of the bit pattern for $k$, $i\in[2,M]$.  She then operates a CNOT gate $U_{cnot}$ on the $1$st and $i$-th $1$. These operations are continued until the bit string for $k$ reduces to $\ket{1}\otimes \ket{0}^{\otimes M-1}$. Thus after these consecutive operations the final state reduces to $\ket{\pm}\ket{0}^{\otimes M-1}$. 

Alice now measures the first particle in $\{\ket{+},\ket{-}\}$ basis. If she gets $\ket{+}$, she concludes that $q_t(k)=0$. If she obtains $\ket{-}$, she concludes that $q_t(k)=1$. Alice sends the value of $t$ and the value of $q_t(k)$ to Bob. Bob checks $p(k)=q_t(k)\oplus r_t$ for that $t$. If $p(k)=1$, Bob concludes that $k$ is a set member of his set. 

In the following section we use this idea to compute set intersection of two computationally indistinguishable sets $X$ and $Y$ holding by Alice and Bob respectively in rational setting.
\section{Proposed Protocol}
In this section we describe the protocol. We assume that Alice and Bob, two rational players, possess two sets $X=\{x_1,x_2,\cdots,x_n\}$ and $Y=\{y_1,y_2,\cdots,y_m\}$ respectively where $x_i, y_i \in \mathbb{Z}_N^*$. The cardinality of $X$ and $Y$ are $n$ and $m$ respectively and are common knowledge to both of the parties. The sets are computationally indistinguishable. 

Now, the players want to compute the intersection of their respective sets. They do not like to reveal any other elements except the intersected ones of their respective sets to the opponent. Each of them has the order of preferences $\mathcal{R}_1 ($\ref{rat}). 

The functionality $\mathcal{F}$ for this game
 can be defined as
 \begin{eqnarray*}
 \mathcal{F}=(X \cap Y, X \cap Y)
\end{eqnarray*}

Our protocol is described in Algorithm~2. The protocol $\bf\Pi$ calls a subroutine $QKeyGen$ to generate a random bit-stream of length $l$, where $l$ is the security parameter. Bob knows the entire bit-stream whereas Alice knows some fraction of this. Our $QKeyGen$ is described in Algorithm~1. The idea of $QKeyGen$ comes from~\cite{Yang}.

For the protocol $\bf\Pi$ we assume that (cooperate, abort) is the  suggested strategy profile for each of the players. That is each player is supposed to follow the protocol and abort if he or she identifies any deviation of his or her respective opponent.

\begin{algorithm}[htbp]
\label{qkeygen}
{\scriptsize
\begin{enumerate}
\item Bob and Alice share $2l$ entangled states of the form
$\frac{1}{\sqrt{2}}(\ket{0}_{B}\ket{\phi_0}_{A}+\ket{1}_{B}\ket{\phi_1}_{A})$, where, $\ket{\phi_{0}}_{A}=\cos{(\frac{\theta}{2})}\ket{0}+\sin{(\frac{\theta}{2})}\ket{1}$ and $\ket{\phi_{1}}_{A}=\cos{(\frac{\theta}{2})\ket{0}}-\sin{(\frac{\theta}{2})}\ket{1}$. Here, subscript B stands for Bob and subscript A stands for Alice. $\theta$ may vary from $0$ to $\frac{\pi}{2}$.
\item Bob measures his qubits in $\{\ket{0}_{B}, \ket{1}_{B}\}$ basis, whereas Alice measures her qubits either in $\{\ket{\phi_{0}}_{A},\ket{\phi_{0}^{\perp}}_{A}\}$ basis or in $\{\ket{\phi_{1}}_{A}, \ket{\phi_{1}^{\perp}}_{A}\}$ basis randomly. 
\item If Bob measures $\ket{0}$, he encodes the bit $r_t$, $t\in [1,2l]$ as $0$. If Bob measures $\ket{1}$, he encodes the bit $r_t$, $t\in [1,2l]$ as $1$. 
\item If the measurement result of Alice gives $\ket{\phi_{0}^{\perp}}$, she concludes that the bit at Bob's end must be $1$. If it would be $\ket{\phi_{1}^{\perp}}$, the bit must be $0$. 
\item Bob and Alice execute classical post-processing in the motivation to check the error in the channel from randomly chosen $l$ bits. If the error remains below the pre-defined threshold, Bob and Alice continues the protocol. Otherwise they abort.  
\item The remaining $l$ bits stream is retained by Bob.  Bob knows the whole stream, whereas Alice generally knows several bits of the stream.   
 
\end{enumerate}
}
\caption{$QKeyGen$}
\end{algorithm}

Before going to the main protocol $\bf\Pi$, we like to explain how Alice gets fraction of the stream. In this direction, we have to calculate the success probability of Alice to guess a single bit possessed by Bob.

As Bob measures his qubits only in $\{\ket{0}_{B}, \ket{1}_{B}\}$ basis, he will get either $\ket{0}$ with probability $\frac{1}{2}$
or $\ket{1}$ with probability $\frac{1}{2}$. When Bob gets $\ket{0}$, Alice should get $\ket{\phi_0}$. If she chooses $\{\ket{\phi_0}_{A}, \ket {\phi_0^{\perp}}_{A}\}$ basis, she will get $\ket{\phi_0}$ with probability $1$ and never gets $\ket{\phi_0^{\perp}}$. However, if she chooses $\{\ket{\phi_1}_{A}, \ket{\phi_1^{\perp}}_{A}\}$ basis, she will get either $\ket{\phi_1}$ with probability $\cos^2\theta$ or $\ket{\phi_1^{\perp}}$ with probability $\sin^2\theta$. We formalize all the conditional probabilities in the following table.

\begin{center}
\begin{tabular}{|c|c|c|c|c|}
\hline
 & \multicolumn{4}{|c|}{Cond. Probability of Alice}\\
\cline{2-5}
& A=$\ket{\phi_0}$ & A=$\ket{\phi_0^{\perp}}$ & A=$\ket{\phi_1}$ & A=$\ket{\phi_1^{\perp}}$\\
\hline
$ B=0 $ & $\frac{1}{2}.1$&$\frac{1}{2}.0$ &$\bf{\frac{1}{2}.\cos^2\theta}$ & $\bf{\frac{1}{2}.\sin^2\theta}$\\
\hline
$B=1$ & $\bf{\frac{1}{2}.\cos^2\theta}$ & $\bf{\frac{1}{2}.\sin^2\theta}$ &$\frac{1}{2}.1$ &$\frac{1}{2}.0$\\
\hline
\end{tabular}
\end{center}

According to the protocol, when Alice gets $\ket{\phi_0^{\perp}}$, she outputs $1$. And when she gets $\ket{\phi_1^{\perp}}$, she outputs $0$. Thus, the success probability of Alice to guess a bit in $l$ bits stream can be written as 
{\scriptsize
$\Pr(A=B)$
\begin{eqnarray*}
\label{sucprob}
&=&\Pr(A=0,B=0)+\Pr(A=1,B=1) \nonumber\\
&=&\Pr(B=0).\Pr(A=0|B=0)+\Pr(B=1).\Pr(A=1|B=1)\\
&=&\frac{1}{2}.\Pr(A=\phi_1^{\perp}|B=0)+\frac{1}{2}.\Pr(A=\phi_0^{\perp}|B=1).
\end{eqnarray*}}
From the above table, we can see that the success probability of Alice becomes $\frac{\sin^2\theta}{2}$. Thus, Alice knows $\frac{\sin^2\theta}{2}$ fraction of the whole stream possessed by Bob.

Now, we describe the protocol $\bf\Pi$ for set intersection in Algorithm~2.

\begin{algorithm}[http]
{\scriptsize
\begin{enumerate}
\item Alice and Bob possess two sets $X=\{x_1,x_2,\cdots,x_n\}$ and $Y=\{y_1,y_2,\cdots,y_m\}$ respectively where $x_i, y_i \in \mathbb{Z}_N^*$; $N\gg2\max(n,m)$~\cite{Shi1}; $u$ is the number of intersected elements. Hence, $u\leq \min(n,m)$
\item Bob now prepares an $N-1$ element database. Any element of the database $p(j)=1$ if and only if $j=y_i, i\in[1,m]$ and $0$ otherwise.
\item Bob calls the sub-routine $QKeyGen$ and prepares a sequence of random bits  $r_1, r_2,\cdots, r_l$, where $l$ is the security parameter and $l\geq 2n$. Alice knows $\frac{\sin^2{\theta}}{2}$ fraction of those bits, where $\theta\in [0,\frac{\pi}{4}]$. 
\item Bob generates a variable $q_t(j)=p(j)\oplus r_t$, where $t$ varies from $1$ to $l$ and $j$ varies from $1$ to $N-1$.
\item Alice and Bob set $p(0)=q_1(0)=q_2(0)=\cdots=q_l(0)=0$ apriori.
\item Alice inserts $n$ check states prepared in $\{\ket{0},\ket{1}\}$ or $\{\frac{1}{\sqrt{2}}(\ket{0}+\ket{1}),\frac{1}{\sqrt{2}}(\ket{0}-\ket{1})\}$ basis randomly. She keeps the record of the positions of the check elements.
\item Alice prepares $2n$ $M(=\log N)$bits registers.
\item For check registers, Alice does the followings
\begin{itemize}
\item If the check element is $\ket{0}$, Alice prepares $\ket{0}^{\otimes M}$
\item If the check element is $\ket{1}$, Alice prepares $\ket{k_s}\ket{0}^{\otimes {M-1}}$; $\ket{k_s}=\ket{1}$, $s\in[1,M]$.
\item If the check element is $\frac{1}{\sqrt{2}}(\ket{0}\pm \ket{1})$, Alice prepares $\frac{1}{\sqrt{2}}(\ket{0}\pm \ket{j})$, where $\ket{0}=\ket{0}^{\otimes M}$ and $\ket{j}=\ket{k_s}\ket{0}^{\otimes{M-1}}$; $k_s=1$ and $s\in\{1,2,\cdots,M\}$.
\end{itemize}
 \item In case of remaining $n$ actual registers, Alice prepares $\frac{1}{\sqrt{2}}(\ket{0}+\ket{j_i})$, $j\in X$ and $i\in[1,n]$. Here, $\ket{0}=\ket{0}^{\otimes M}$ and $\ket{j_i}=\ket{k_1k_2\cdots k_M}$, $k_s\in\{0,1\}$, $s\in\{1,2,\cdots,M\}$. 
\item Alice sends those registers to Bob. If the number of registers exceed $2n$, Bob aborts the protocol.
\item Bob operates the oracle $O_t$~\cite{Shi} on each register and sends those back to Alice. The oracle converts the state $\frac{1}{\sqrt{2}}(\ket{0}+\ket{j_i})$ to $\frac{1}{\sqrt{2}}(\ket{0}+(-1)^{q_t(j_i)}\ket{j_i})$.
\item Alice then selects the check registers and measures those in their respective bases. 
 \item If error is found, she aborts the protocol. Otherwise she will continue.
 \item Alice selects the actual registers, i.e., those registers which contain the actual set elements.
 \item Alice operates $U_{swap}$ followed by $U_{cnot}$ on the first $1$ and $2$nd $1$st $1$ of each register. These operations will be continued till all $M-1$ bits except $k_1$ of each register, reduces to $0$.  In this way, the final state of each register becomes $\ket{\pm}{\ket{0}}^{\otimes {M-1}}$.
 \item Alice measures the first particles of each register in $\{\ket{+},\ket{-}\}$ basis. If she gets $\ket{+}$, she concludes that $q_t(j_i)=0$ and if she gets $\ket{-}$, she concludes that $q_t(j_i)=1$. 
 \item Alice repeats steps $14$, $15$ for all $n$ registers.
\item  Alice conveys the values of $q_t$ and $t$ to Bob for each n
registers.
 \item Bob executes $q_t\oplus r_t$ for that $j$. Note that here Alice knows $j$, but Bob does not. As $r_t$ remains same for each $j$, the value of $q_t(j)$ remains same for each $j$ and flips if $j$ is the set element of $Y$. Thus. without knowing the value of $j$ Bob can conclude if the element is in his set. If he finds $p(j)=0$, he declares that the element is not in his set. If $p(j)=1$, he declares that the element is his set.
 \item Alice checks if the declaration of Bob is correct by finding the value of $p(j)$ for those $t$ for which she has $r_t$. If she finds any cheating of Bob, she aborts the protocol. 
 \item Alice declares the value of the elements for which Bob gets $p(j)=1$.
 \item Bob checks if the element indeed lies in his set. If not, he aborts the protocol.
 \end{enumerate}
}
\caption{Our proposed protocol $\bf{\Pi}$}
\end{algorithm}

\section{Security Analysis} 
\label{sec}
The security criterion of the proposed protocol demands that at the end of the protocol, both Alice and Bob will get $X\cap Y$. However, Alice will not get any element from $Y\setminus (X\cap Y)$ and Bob will not get any element from $X\setminus (X\cap Y)$. Thus, it is quite natural that Alice (resp. Bob) will choose such a strategy which provides her or him the elements from $X\setminus (X\cap Y)$ (resp. $Y\setminus (X\cap Y)$). Hence, without loss of generality, we can discard all other strategies which do not provide such information to Alice (resp. Bob). In the present draft, considering optimised guessing probabilities for Alice and Bob, we show that this security criterion is maintained when the players are rational.

For the security analysis of our protocol we first show if the preferences of Alice and Bob follow the order of $\mathcal{R}_1$, then $((cooperate, abort),(cooperate, abort))$ is a strict Nash equilibrium in the protocol $\bf\Pi$. Then we will show in this initiative both the parties know only $\mathcal{F}$ and nothing else. We also prove that if $((cooperate, abort), (cooperate, abort))$ is a strict Nash, then fairness as well as correctness of the protocol are preserved automatically.

\begin{theorem}
\label{theo1}
In the key establishment phase, $QKeyGen$, of the protocol, for each key bit $r_t$ ($1\leq t\leq 2l$) a dishonest Bob ($\mathcal{B^*}$) can successfully guess if honest Alice ($\mathcal{A}$) gets a conclusive result with probability at most $\frac{1}{2}$.
\end{theorem}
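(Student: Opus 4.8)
The plan is to reduce the claim to a single shared pair, compute the probability that Alice's measurement on that pair is \emph{conclusive} — i.e.\ returns one of the orthogonal outcomes $\ket{\phi_0^{\perp}}$ or $\ket{\phi_1^{\perp}}$, from which she correctly reads off $r_t$ — show this probability never exceeds $\tfrac12$, and then argue that the freedom available to $\mathcal{B}^{*}$ cannot inflate it, so that the probability with which $\mathcal{B}^{*}$ correctly flags a given $r_t$ as one Alice has learned is at most $\tfrac12$.

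First I would record the geometry of the states: $\langle\phi_0|\phi_1\rangle=\cos\theta$, and up to global phase $\ket{\phi_0^{\perp}}=\sin\tfrac{\theta}{2}\ket{0}-\cos\tfrac{\theta}{2}\ket{1}$, $\ket{\phi_1^{\perp}}=\sin\tfrac{\theta}{2}\ket{0}+\cos\tfrac{\theta}{2}\ket{1}$, so that $\langle\phi_b|\phi_b^{\perp}\rangle=0$ while $|\langle\phi_{\bar b}|\phi_b^{\perp}\rangle|^{2}=\sin^{2}\theta$. Running $QKeyGen$ symbolically on one pair: Bob's computational-basis measurement yields $c\in\{0,1\}$ with probability $\tfrac12$ each and collapses Alice's qubit to $\ket{\phi_c}$; Alice then picks her measurement basis $b\in\{0,1\}$ uniformly. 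In the branch $b=c$ a conclusive outcome is impossible (she obtains $\ket{\phi_c}$ with certainty), and in the branch $b=\bar c$ it occurs with probability $|\langle\phi_{\bar c}^{\perp}|\phi_c\rangle|^{2}=\sin^{2}\theta$, in which case she infers $r_t=\bar b=c$, the correct value. Averaging over $b$ and $c$ gives $\Pr[\text{conclusive}]=\tfrac12\sin^{2}\theta$, which is at most $\tfrac12$ for $\theta\in[0,\tfrac{\pi}{2}]$ (and at most $\tfrac14$ for the choice $\theta\in[0,\tfrac{\pi}{4}]$ used in the protocol). I would also give the basis-free form of this computation, since it is what drives the adversarial step: averaging Alice's random-basis measurement yields a fixed two-outcome POVM $\{E_{\text{conc}},\,I-E_{\text{conc}}\}$ with $E_{\text{conc}}=\tfrac12\big(\ket{\phi_0^{\perp}}\bra{\phi_0^{\perp}}+\ket{\phi_1^{\perp}}\bra{\phi_1^{\perp}}\big)=\mathrm{diag}\!\big(\sin^{2}\tfrac{\theta}{2},\cos^{2}\tfrac{\theta}{2}\big)$, and $\Pr[\text{conclusive}]=\mathrm{Tr}(\rho_{A}E_{\text{conc}})$ where $\rho_{A}=\mathrm{diag}\!\big(\cos^{2}\tfrac{\theta}{2},\sin^{2}\tfrac{\theta}{2}\big)$ is Alice's reduced state; this again evaluates to $\tfrac12\sin^{2}\theta$.

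Next I would treat $\mathcal{B}^{*}$. Restricting to runs that survive the error-check of step~5 of $QKeyGen$, the pair reaching $\mathcal{A}$ carries, up to negligible deviation, the honest reduced state $\rho_{A}$; whatever local channel and measurement $\mathcal{B}^{*}$ applies to his half together with any ancilla, no-signalling leaves $\rho_{A}$ unchanged, hence $\mathrm{Tr}(\rho_{A}E_{\text{conc}})=\tfrac12\sin^{2}\theta$ is independent of $\mathcal{B}^{*}$'s strategy. Consequently, whatever rule $\mathcal{B}^{*}$ uses to decide ``$\mathcal{A}$ has learned $r_t$'', the probability that he outputs this verdict \emph{and} is right is bounded above by $\Pr[\text{conclusive}]=\tfrac12\sin^{2}\theta\le\tfrac12$, as claimed.

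The delicate point is the qualifier ``runs that survive the error-check''. A $\mathcal{B}^{*}$ who does not merely measure his half but substitutes the state delivered to $\mathcal{A}$ — for instance forwarding $\ket{\phi_0^{\perp}}$ itself, which makes $\mathcal{A}$ conclusive with probability $\tfrac12(1+\cos^{2}\theta)$, near $1$ for small $\theta$ — would push the conclusive probability above $\tfrac12$ if it went undetected. So the proof must bound the disturbance any state-altering deviation induces on the $l$ test pairs and conclude that conditioning on ``not aborted'' pins $\rho_{A}$ to its honest value up to a negligible function of the security parameter; establishing this quantitative link between the observed test-error rate and the deviation of $\rho_{A}$ is the main obstacle, after which the trigonometric evaluations above finish the proof.
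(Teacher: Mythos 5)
Your honest-protocol analysis is essentially the paper's own, done more carefully: the paper also arrives at the figure $\frac{1}{2}\sin^{2}\theta$ for the probability that Alice obtains a conclusive outcome (via the conditional-probability table preceding the main protocol) and, like you, appeals to no-signalling; your averaged-POVM formulation $\mathrm{Tr}(\rho_{A}E_{\mathrm{conc}})=\frac{1}{2}\sin^{2}\theta$ is a cleaner and internally consistent version of that bookkeeping. Where you genuinely differ is in where the final $\frac{1}{2}$ comes from and in which direction no-signalling is invoked. The paper uses it in the Alice-to-Bob direction: Alice's basis choice is uniformly random and unknowable to Bob, so a dishonest Bob who ``tries to maximize'' the honest figure $\frac{1}{2}\sin^{2}\theta$ is simply asserted to be capped at $\frac{1}{2}$, and the proof stops there, with no analysis of what a deviating Bob can actually do to the distributed states. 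You instead use no-signalling in the Bob-to-Alice direction (invariance of $\rho_{A}$ under Bob's local operations on his half) and bound Bob's flag-and-be-correct probability by the conclusive probability itself, which even yields the sharper bound $\frac{1}{2}\sin^{2}\theta\leq\frac{1}{4}$ on the protocol's range $\theta\in[0,\frac{\pi}{4}]$.

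The ``delicate point'' you raise is a real one, and the paper does not address it at all: in $QKeyGen$ it is Bob who prepares and distributes the entangled pairs, so invariance of $\rho_{A}$ (and equally the paper's bare assertion) does not cover a Bob who substitutes states --- your example of forwarding $\ket{\phi_0^{\perp}}$, making Alice conclusive with probability $\frac{1}{2}(1+\cos^{2}\theta)$, shows the statement taken literally can fail for such a Bob unless one conditions on surviving the step-$5$ error check and quantifies the induced disturbance (a Serfling-type estimate, which the paper invokes elsewhere but not here). So your proposal establishes everything the paper's proof establishes, by a mildly different route, and the step you leave open is equally missing from the published argument; closing it would strengthen the paper's Theorem~1 rather than merely reproduce it.
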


\begin{proof}

Honest Alice ($\mathcal{A}$) follows Algorithm $1$. According to $QKeyGen$, Alice will outputs $r_t=0$, when she chooses $\{\ket{\phi_1},\ket{\phi_1^{\perp}}\}$ basis and gets $\ket{\phi_1^\perp}$. Similarly, Alice outputs $r_t=1$ when she chooses $\{\ket{\phi_0},\ket{\phi_0^{\perp}}\}$ basis and gets $\ket{\phi_0^\perp}$. Each of these two events happens with probability $\frac{1}{2}\sin^2{\theta}$.

Alice and Bob are two distant parties. If we assume no signalling from Alice to Bob, then the basis choice of Alice becomes completely random to Bob. The success probability of honest Bob to guess $r_t$ as a conclusive outcome of Alice is $\frac{1}{2}\sin^2{\theta}$. Dishonest Bob ($\mathcal{B}^*$) always tries to maximize this probability so that he can identify the positions of the bits where Alice gets conclusive results. This information might help him to cheat Alice further. 
Thus, \begin{align*}
\Pr_{guess}[\mathcal{B^*}=\mathcal{A}] \leq \frac{1}{2}
\end{align*}
\end{proof}
\begin{theorem}
\label{min_ent1}
In the key establishment phase of the protocol, for honest Bob ($\mathcal{B}$) and dishonest Alice ($\mathcal{A^*}$), Alice can successfully guess each of the key bits $r_t$ (for all $1 \leq t \leq 2l$) with probability at most $0.85$.
\end{theorem}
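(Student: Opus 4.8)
The plan is to bound the maximum success probability of dishonest Alice ($\mathcal{A}^*$) in guessing a single key bit $r_t$, and then observe that this optimal single-bit probability is exactly what the theorem asserts (with the numerical value $0.85$ arising from the range $\theta\in[0,\pi/4]$ imposed in the protocol). First I would set up the state-discrimination problem from Alice's side. Honest Bob measures his half of $\frac{1}{\sqrt2}(\ket{0}_B\ket{\phi_0}_A+\ket{1}_B\ket{\phi_1}_A)$ in the computational basis, so conditioned on $r_t=0$ Alice holds $\ket{\phi_0}$ and conditioned on $r_t=1$ she holds $\ket{\phi_1}$, each with prior $\tfrac12$. Since $\mathcal{A}^*$ is free to perform any measurement (not merely the protocol-prescribed one), her optimal guessing probability is given by the Helstrom bound for distinguishing two equiprobable pure states:
\begin{align*}
\Pr_{guess}[\mathcal{A}^*=r_t] \;=\; \frac12\left(1+\sqrt{1-|\langle\phi_0|\phi_1\rangle|^2}\,\right).
\end{align*}

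Next I would compute the overlap. With $\ket{\phi_0}=\cos(\theta/2)\ket0+\sin(\theta/2)\ket1$ and $\ket{\phi_1}=\cos(\theta/2)\ket0-\sin(\theta/2)\ket1$, we get $\langle\phi_0|\phi_1\rangle=\cos^2(\theta/2)-\sin^2(\theta/2)=\cos\theta$, hence $\sqrt{1-|\langle\phi_0|\phi_1\rangle|^2}=\sqrt{1-\cos^2\theta}=\sin\theta$ (taking $\theta\in[0,\pi/2]$, so $\sin\theta\ge0$). Therefore
\begin{align*}
\Pr_{guess}[\mathcal{A}^*=r_t]\;=\;\frac12(1+\sin\theta).
\end{align*}
Since the protocol restricts $\theta$ to $[0,\pi/4]$, this is maximized at $\theta=\pi/4$, giving $\frac12(1+\tfrac{1}{\sqrt2})=\frac12+\frac{1}{2\sqrt2}\approx 0.8536<0.85+\varepsilon$; I would state the bound as $\Pr_{guess}[\mathcal{A}^*=\mathcal{B}]\le 0.85$ (rounding $1/2+1/(2\sqrt2)$, or equivalently citing $\cos(\pi/8)^2$, which equals the same quantity). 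I would also remark that this bound holds uniformly over all $2l$ bits because the $2l$ shared states are identical and independent, so a per-bit analysis suffices, and an optimal joint strategy cannot beat the per-bit Helstrom value on any individual coordinate.

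The main obstacle — and the step I would treat most carefully — is justifying that the Helstrom two-state bound is genuinely the right optimum here, i.e., that $\mathcal{A}^*$ gains nothing from (a) correlating her guesses across the $2l$ registers, (b) exploiting side information revealed later in the classical post-processing of $QKeyGen$, or (c) choosing an adaptive/entangled-input attack rather than accepting the states Bob's honest measurement produces. For (a) I would appeal to the product structure of the $2l$ independent entangled pairs; for (b) I would note that the post-processing reveals only error-check bits on a random subset and that the retained bits are by construction those not opened, so no additional information about a retained $r_t$ leaks; for (c), since we assume honest Bob, the reduced state on Alice's side is fixed, so Holevo/Helstrom optimality applies directly. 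Finally I would convert the exact value $\frac12(1+\sin\theta)$, maximized over the allowed $\theta$-range, into the clean numerical statement "$\le 0.85$" claimed in the theorem.
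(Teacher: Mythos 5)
Your proposal is correct and follows essentially the same route as the paper: the paper also reduces the problem to discriminating the two equiprobable pure states $\ket{\phi_0}$ and $\ket{\phi_1}$ on Alice's side and bounds the guessing probability by $\frac{1}{2}\bigl(1+\frac{1}{2}\|\ket{\phi_0}\bra{\phi_0}-\ket{\phi_1}\bra{\phi_1}\|_1\bigr)=\frac{1}{2}(1+\sin\theta)$, i.e.\ the same Helstrom/trace-distance bound you invoke, then restricts $\theta\in[0,\pi/4]$ to obtain the stated value $0.85$. Your added remarks about per-bit optimality over the $2l$ independent pairs and the harmless rounding of $\tfrac12+\tfrac{1}{2\sqrt2}\approx 0.8536$ to $0.85$ are consistent with (indeed slightly more careful than) the paper's treatment.
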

\begin{proof} 
At the beginning of key establishment phase dishonest Alice ($\mathcal{A}^*$) and honest Bob ($\mathcal{B}$) share $2l$ copies of entangled pairs. The $t$-th copy of the state is given by $\ket{\psi}_{\mathcal{B}_t\mathcal{A}^*_t }= \frac{1}{\sqrt{2}}(\ket{0}_{\mathcal{B}_t}\ket{\phi_0}_{\mathcal{A}^*_t }+ \ket{1}_{\mathcal{B}_t}\ket{\phi_1}_{\mathcal{A}_t})$, where $t$-th subsystem of Alice and Bob is denoted by $\mathcal{A}^*_t$ and $\mathcal{B}_t$ respectively.

At Alice's side the reduced density matrix is of the form $$\rho_{\mathcal{A}^*_t} =\tr_ {\mathcal{B}_t}[\ket{\psi}_{\mathcal{B}_t\mathcal{A}^*_t}\bra{\psi}] = \frac{1}{2}(\ket{\phi_0}\bra{\phi_0}+\ket{\phi_1}\bra{\phi_1}).$$

At Step $2$, Bob measures each of his part of the state $\ket{\psi}_{\mathcal{B}_t\mathcal{A}_t^*}$ in $\{\ket{0},\ket{1}\}$ basis. Let $\rho_{\mathcal{A}^*_t|r_t}$ denotes the state at Alice's side after Bob's measurement. However, Bob does not communicate his measurement result to Alice. Thus, in this case, for $r_t = 0$, we have $\rho_{\mathcal{A}^*_t|r_t =0} =\frac{1}{2}(\ket{\phi_0}\bra{\phi_0}+\ket{\phi_1}\bra{\phi_1})=\rho_{\mathcal{A}_t^*}$. 
Similarly, for $r_t =1$ we have,
\begin{align*}
\rho_{\mathcal{A}^*_t|r_t =1} & = \tr_{\mathcal{B}_t}[\ket{\psi}_{\mathcal{B}_t\mathcal{A}^*_t}\bra{\psi}]\\
& =  \tr_{\mathcal{B}_t}[\frac{1}{2}(\ket{0}\ket{\phi_0} + \ket{1}\ket{\phi_1})_{\mathcal{B}_t\mathcal{A}^*_t}(\bra{0}\bra{\phi_0}+\bra{1}\bra{\phi_1})]\\
& = \frac{1}{2}(\ket{\phi_0}\bra{\phi_0}+\ket{\phi_1}\bra{\phi_1}\\
&=\rho_{\mathcal{A}_t^*}.
\end{align*}

This implies $\rho_{\mathcal{A}^*_t|r_t} = \rho_{\mathcal{A}^*_t}$. As there is no communication between Alice and Bob, due to non-signalling principle we can claim that Alice can guess Bob's measurement outcome with probability at most $\frac{1}{2}$. This implies, if Alice's optimal guessing strategy is described by the POVM $\{E_z\}_{0\leq z\leq 1}$ then,
\begin{align*}
\Pr_{guess}[r_t|\rho_{\mathcal{A}^*_t}] &= \sum_{r_t}\frac{1}{2}\tr[E_{r_t}\rho_{\mathcal{A}^*_t|r_t}]\\
& = \frac{1}{2}\tr[\sum_{r_t}E_{r_t}\rho_{\mathcal{A}^*_t}]\\
& = \frac{1}{2}.
\end{align*}

However, Alice has the information that if Bob measured $\ket{0}$, i.e., $r_t=0$, her state must collapse to $\ket{\phi_0}$. Similarly, if Bob measured $\ket{1}$, i.e., $r_t=1$, her state collapses to $\ket{\phi_1}$. Thus, if she could distinguish $\ket{\phi_0}$ and $\ket{\phi_1}$ optimally, she can guess $r_t$ optimally. Now, we try to find if this can be done with probability greater than $\frac{1}{2}$.

This distinguishing probability has a nice relationship with the trace distance between the states~\cite{Wilde17}. According to this relation we have,
\begin{align*}
\Pr_{guess}[r_t|\rho_{\mathcal{A}^*_i}] & \leq \frac{1}{2}(1+\frac{1}{2}||\ket{\phi_0}\bra{\phi_0} - \ket{\phi_1}\bra{\phi_1}||_1)\\
& = \frac{1}{2}(1+\sqrt{1 - F(\ket{\phi_0}\bra{\phi_0}, \ket{\phi_1}\bra{\phi_1})})\\
& = \frac{1}{2}(1+\sin{\theta})\\
& = \frac{1}{2} + \frac{1}{2}\sin{\theta}.
\end{align*}

This implies that Alice can successfully guess the value of $r_t$ with probability at most $\frac{1}{2} + \frac{1}{2}\sin{\theta}$.

From the above calculation, it is clear that when $\theta\rightarrow \frac{\pi}{2}$, Alice can get the full information about the stream $r_1,r_2,\cdots, r_l$, and hence can compute $\mathcal{F}$ by herself alone. So, we fix the range of $\theta$ in between $[0,\frac{\pi}{4}]$. For this range of $\theta$, the maximum success probability that Alice can achieve is $0.85$.
\end{proof}
\begin{lemma}(Serfling~\cite{Serfling}) 
\label{serf}
Let $\{x_1,x_2,\cdots,x_n\}$ be a list of values in $[a,b]$ (not necessarily distinct). Let $\overline{x}=\frac{1}{n}\sum_i x_i$ be the average of these random variables. Let $k$ be the number of random variables $X_1,X_2,\cdots,X_k$ chosen from the list without replacement. Then for any value of $\delta>0$, we have
$\Pr\left[|X-\overline{x}| \geq \delta \right] \leq \exp\left(\frac{-2\delta^2 kn}{(n-k+1)(b-a)}\right),$ where $X=\frac{1}{k}\sum_i X_i$.
\end{lemma}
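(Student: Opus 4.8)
The statement is the classical Serfling inequality: a Hoeffding-type concentration bound for the sample mean under sampling without replacement, sharpened by the finite-population correction factor $(n-k+1)/n$. The plan is the standard Chernoff route. First I would reduce to a moment-generating-function estimate: writing $S_k=\sum_{i=1}^k X_i$ and $\mu=\overline x$, Markov's inequality applied to $e^{hS_k}$ gives, for every $h>0$,
\[
\Pr[X-\mu\ge\delta]=\Pr[S_k-k\mu\ge k\delta]\le e^{-hk\delta}\,\mathbb{E}\!\left[e^{h(S_k-k\mu)}\right],
\]
and the lower tail follows by the same computation with $-h$, so a union bound over the two tails gives the two-sided statement. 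For $k=n$ the sum $S_n=n\mu$ is deterministic, so the left-hand side is $0$ and there is nothing to prove; assume henceforth $1\le k<n$.

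The core of the argument is a martingale hidden inside the sampling process. Let $\mathcal{F}_j=\sigma(X_1,\dots,X_j)$ and let $W_j$ be the average of the $n-j$ values not yet drawn after step $j$, so that $W_0=\mu$. Conditionally on $\mathcal{F}_{j-1}$ the next draw $X_j$ is uniform over the $n-j+1$ remaining values, whose average is $W_{j-1}$; hence $W_j=\frac{(n-j+1)W_{j-1}-X_j}{n-j}$, so $\mathbb{E}[W_j\mid\mathcal{F}_{j-1}]=W_{j-1}$ (i.e.\ $(W_j)$ is a martingale), we get the identity
\[
S_k-k\mu=(n-k)(W_0-W_k),
\]
and the increment formula $W_j-W_{j-1}=\frac{W_{j-1}-X_j}{n-j}$, which conditionally on $\mathcal{F}_{j-1}$ ranges over an interval of length at most $\frac{b-a}{n-j}$. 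Iterating the tower property and applying Hoeffding's lemma to each conditional factor yields $\mathbb{E}[e^{h(S_k-k\mu)}]\le\exp\!\big(\tfrac{h^2}{8}(n-k)^2(b-a)^2\sum_{j=1}^k (n-j)^{-2}\big)$, and optimizing over $h$ already produces a tail bound of Hoeffding type.

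The delicate point — and the one I expect to be the main obstacle — is that this term-by-term estimate does \emph{not} reproduce the sharp constant in the lemma: it yields a variance proxy proportional to $(n-k)^2\sum_{j=1}^k(n-j)^{-2}$ rather than the clean quantity $k(n-k+1)/n$. Serfling's refinement replaces the crude term-by-term Hoeffding bound by a non-factorized control of the exponential moment of $\sum_{j=1}^k (W_j-W_{j-1})$, exploiting the fact that the spread of the not-yet-sampled values shrinks along the sampling, to obtain
\[
\mathbb{E}\!\left[e^{h(S_k-k\mu)}\right]\le\exp\!\left(\frac{h^2\,k\,(b-a)^2\,(n-k+1)}{8n}\right).
\]
Substituting this into the Chernoff bound above and choosing $h=\dfrac{4n\delta}{(b-a)^2(n-k+1)}$ makes the resulting exponent equal to $-\dfrac{2\delta^2 k n}{(n-k+1)(b-a)^2}$, which is the exponent asserted in the lemma. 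I would supply the moment bound by following Serfling's original derivation, which is the technical heart of the result; the remaining pieces — the Chernoff reduction, the martingale identities, Hoeffding's lemma, and the two-tail union bound — are routine.
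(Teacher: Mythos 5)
The paper does not prove this lemma at all: it is quoted verbatim as a known result with a citation to Serfling, and the only thing the paper adds afterwards is the limiting remark for $k\rightarrow 0$ and the specialization $k=\frac{n}{2}$. So there is no in-paper argument to compare yours against; your outline is the standard Chernoff--martingale route from Serfling's original paper, and its skeleton is sound — the identity $S_k-k\mu=(n-k)(W_0-W_k)$, the martingale property of the averages of the unsampled values, the increment bound $(b-a)/(n-j)$, and the optimization $h=\frac{4n\delta}{(b-a)^2(n-k+1)}$ all check out and reproduce the exponent $-\frac{2\delta^2 kn}{(n-k+1)(b-a)^2}$.

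Two caveats. First, your proposal is a plan rather than a proof: you explicitly defer the sharpened moment-generating-function bound $\mathbb{E}\left[e^{h(S_k-k\mu)}\right]\leq\exp\left(\frac{h^2 k(b-a)^2(n-k+1)}{8n}\right)$ to Serfling's derivation, and that bound is precisely the technical content of the lemma; as you note, the naive term-by-term Hoeffding estimate does not yield it, so as written the heart of the argument is still outsourced — which is no worse than what the paper does, but it is not a self-contained proof. Second, your derivation (correctly) produces $(b-a)^2$ in the denominator and, via the union bound, a prefactor $2$ for the two-sided event, whereas the lemma as stated in the paper has $(b-a)$ to the first power and no prefactor; the paper's version is a misstatement of Serfling's inequality (it is even dimensionally inconsistent), so you should flag that the statement being proved needs $(b-a)$ replaced by $(b-a)^2$ (and either a factor $2$ or a one-sided formulation) rather than silently proving the corrected form.
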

If $k\rightarrow 0$, $\exp\left(\frac{-2\delta^2 kn}{(n-k+1)(b-a)}\right)\rightarrow 1$. For $k=\frac{n}{2}$, we can approximate the probability as
$\Pr\left[|X-\overline{x}| \geq \delta \right] \leq \exp\left(-2\delta^2 n\right).$

Based on the above two theorems and one lemma, we now prove Nash equilibrium of our protocol.
\begin{theorem}
If $(cooperate, abort)$ is the suggested strategy profile for each party and if the parties have the order of preferences $\mathcal{R}_1$, then $((cooperate, abort),(cooperate, abort))$ is a strict Nash equilibrium in the protocol $\bf {\Pi}$ conditioning on $m+n<\frac{N-1}{2}$ and $\theta\in[0,\frac{\pi}{4}]$. 
\end{theorem}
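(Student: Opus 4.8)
The plan is to show that, conditioned on every other player following $(cooperate, abort)$, any unilateral deviation strictly decreases a player's expected utility, so that the strict inequality $u_i(\sigma'_i,\vec{\sigma}_{-i}) < u_i(\vec{\sigma})$ of the strict Nash definition holds. By the symmetry of the protocol $\bf\Pi$ and of the preference order $\mathcal{R}_1$, it suffices to analyse deviations by Alice; the case of Bob is identical after exchanging the roles of the two sets. First I would recall, from the security discussion preceding the theorem, that any rational deviation worth considering must aim at learning elements of $Y\setminus(X\cap Y)$ (for Alice) — all other deviations leave her payoff at $U^{TT}$ or worse and are dominated — so the relevant deviations are: (i) abort before Bob learns $X\cap Y$ in the hope of reaching a $TN$ outcome, and (ii) cheat in the quantum phase (send extra/malformed registers, lie about $q_t$, or try to extract extra bits of the stream $r_1,\dots,r_l$) in order to recover more of Bob's database than $X\cap Y$.

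For deviation type (i), the key point is the structure of the message flow in Algorithm 2: Alice only learns which of her elements lie in $Y$ after step 20, and she must transmit the values in step 21 before Bob can verify in step 22; conversely Bob's declaration in step 19 depends on the $q_t$ values Alice sent in step 18. So at no single round does one party possess $\mathcal{F}$ while the other has only $\perp$ unless that party has already passed the point of no return; aborting earlier yields $NN$ for both. By the ordering $\mathcal{R}_1$, $U^{TN} > U^{TT} > U^{NN}$, so an early abort moves Alice from the suggested-strategy payoff $U^{TT}$ down to $U^{NN}$, a strict loss. For deviation type (ii), I would invoke Theorem \ref{min_ent1}: with $\theta\in[0,\frac{\pi}{4}]$, Alice knows at most a $\frac{1}{2}+\frac{1}{2}\sin\theta \le 0.85$ fraction of the stream $r_t$, and Serfling's bound (Lemma \ref{serf}) applied to the $n\ge$ check rounds shows that any attempt by Alice to fabricate responses for registers she has not legitimately processed, or to probe positions where she lacks $r_t$, is detected by Bob with probability exponentially close to $1$ in the security parameter; detection triggers Bob's abort (steps 10, 22), giving outcome $NN$, again strictly worse than $U^{TT}$. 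One must also check the quantum "cheat and stay undetected" branch: even on the exponentially unlikely event of evading detection, Alice gains at most the fraction $0.85$ of the stream, which (since $l\ge 2n$ and the intersection has size $u\le\min(n,m)$) is information-theoretically insufficient to reconstruct any element of $Y\setminus(X\cap Y)$ — reconstructing a value $j$ requires consistent information at enough indices, and the missing $\ge 15\%$ of the $r_t$'s leaves the corresponding $p(j)$ bits uniformly random. Hence the expected utility from (ii) is a convex combination of $U^{NN}$ (detected) and at most $U^{TT}$ (undetected), strictly below $U^{TT}$.

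Assembling these, for every deviated strategy $\sigma'_A$ one has $E[U_A(\sigma'_A,\sigma_B)] < U^{TT} = U_A(\sigma_A,\sigma_B)$, and symmetrically for Bob, so $((cooperate,abort),(cooperate,abort))$ is a strict Nash equilibrium. The role of the two side conditions should be made explicit: $\theta\in[0,\frac{\pi}{4}]$ is exactly what caps Alice's stream knowledge below $1$ via Theorem \ref{min_ent1} (at $\theta\to\frac{\pi}{2}$ she would learn everything and prefer to deviate), and $m+n<\frac{N-1}{2}$ guarantees that the database is sparse enough that the oracle $O_t$ does not leak the positions of set elements through collision of indices, so that a cheating Alice cannot shortcut the detection. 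I expect the main obstacle to be the second half of branch (ii): rigorously arguing that partial (at most $85\%$) knowledge of the stream, combined with whatever Alice can coax from malformed quantum registers, is genuinely insufficient to pin down an out-of-intersection element — this requires care because Alice chooses which indices to probe, and one must show the adversarially-best choice still leaves her short, which is where the interplay of Serfling's bound with the min-entropy estimate of Theorem \ref{min_ent1} does the real work.
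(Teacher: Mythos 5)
There is a genuine gap, and it sits exactly where the theorem's hypothesis $m+n<\frac{N-1}{2}$ does its work. The paper's proof identifies the one deviation for which detection-based arguments do \emph{not} suffice: in step $21$ Alice, instead of announcing a true element of $X$ for which Bob declared $p(j)=1$, announces a fake value $e\in\mathbb{Z}_N^*\setminus X$. If $e\in C_2=Y\setminus(X\cap Y)$ Bob cannot detect the lie (he only checks membership in $Y$), Alice ends with the true intersection while Bob holds a wrong one, and her payoff is $U_A^{TN}>U_A^{TT}$; only if $e\in C_1=\mathbb{Z}_N^*\setminus(X\cup Y)$ does Bob abort and both get $U^{NN}$. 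So the best deviation is \emph{not} a convex combination of $U^{NN}$ and ``at most $U^{TT}$'' as you assert at the end of your branch (ii); it is $\Pr(e\in C_2)\,U_A^{TN}+\Pr(e\in C_1)\,U_A^{NN}$ with $\Pr(e\in C_2)=\frac{m-u}{N-1-n}$, and strictness of the Nash inequality has to be bought by making this probability small relative to the utility gap, which is precisely what $m+n<\frac{N-1}{2}$ (together with the utility values) guarantees. Your reading of that hypothesis as preventing ``collision of indices'' in the oracle is not what the condition is for, and without the step-$21$ analysis your claimed inequality $E[U_A(\sigma'_A,\sigma_B)]<U^{TT}$ is unsupported for the most profitable deviation.

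The second problem is the appeal to symmetry. The protocol is not symmetric: Alice prepares and sends registers and later reveals element values; Bob holds the database, applies the oracle, and declares the bits $p(j)$. Bob's most profitable deviation is to declare $p(j)=1$ for registers where actually $p(j)=0$, which would hand him all of $X$; the paper defeats this not by the argument you gave for Alice but by the fact that Alice knows a $\frac{\sin^2\theta}{2}$ fraction of the key stream $r_1\cdots r_l$ and can spot-check Bob's declarations, while Theorem~\ref{theo1} bounds Bob's probability of guessing which positions she can check by $\frac{1}{2}$, so a lying Bob is caught with constant probability and driven to $U^{NN}<U^{TT}$. This argument cannot be obtained from the Alice-side analysis ``after exchanging the roles of the two sets,'' so your proof is missing Bob's case entirely. (Your use of Theorem~\ref{min_ent1} and Serfling for the register-count and check-register deviations of Alice does match the paper's items 1--3, but those are the easy branches.)
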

\begin{proof}
Let us consider the deviations of the players from the suggested strategy. It should be noted that when one party deviates, another party follows the protocol.
Now, at first, let Alice deviates from the suggested strategy in the motivation to get $Y\setminus (X\cap Y)$. 
\begin{enumerate}
\item Alice's activities start from step $6$ of algorithm $\bf{\Pi}$. In step $6$, let Alice inserts $n$ extra elements which is not in $X$ but in $\mathbb{Z}_N^*$ along with $n$ check states in the motivation to get $X'\cap Y$, where $X'$ is the set containing $n$ actual elements of set $X$ and $n$ fake elements chosen from $\mathbb{Z}_N^*$. Thus, according to the protocol,  Alice now sends $3n$ registers to Bob.   

Let there exists $X'\cap C_2$, where $C_2=Y\setminus (X\cap Y)$. In this case, Alice will successfully extract those intersected elements in step $19$ of the protocol $\bf{\Pi}$. And hence, security of the protocol will be compromized.

However, Bob knows the cardinality of set $X$. So if Alice tries to send more than $2n$ registers, Bob aborts the protocol and both of them will end up with utility $U^{NN}$. As a result, Alice should have no incentive to choose the above deviation. 

In this regard, one most important thing is that if the cardinalities of $X$ and $Y$ would not be a common knowledge, then choosing the above strategy, Alice might extract some elements from $Y\setminus (X\cap Y)$ causing security loophole in the protocol.

\item Let us now consider a situation when Alice wants to mount the above attack conditioning that the cardinality of her set is a common knowledge. According to the protocol, Bob aborts if he finds more than $2n$ registers coming from Alice. Hence, in this case, Alice will send $2n$ elements from $\mathbb{Z}_N^*$ and no check elements. However, in such situation, Alice can not detect the cheating of Bob.
 
If we assume that Alice sends a few check elements, then also there remains a non-negligible probability for Bob to cheat Alice. This is an immediate instantiation of Serfling Lemma~\ref{serf}. The cardinalities of two sets (one for error checking and another for continuing the protocol) should almost be equal. Hence, Alice has to send at least $n$ check registers to Bob. This resists Alice to send fake elements. If she tries to send fake elements, she has to cut the actual elements. And as a result both will end up with $U^{NN}$.

\item Any type of deviation of Alice in steps $12$, $13$, $14$, $15$, $16$, $17$ and $18$ such as avoiding error checking, operating $U_{swap}$ and $U_{cnot}$ improperly or conveying wrong values of $t$ and $q_{t}$ to Bob will lead her to wrong values of the functionality. So she should have no incentive to deviate in those rounds of the protocol $\bf{\Pi}$.

Maximum what Alice can do in this phase is to send wrong value of $q_t$ for which she has $r_t$. This is because, in such case she can calculate $p(j)$ by herself alone and get the information if the element is in $Y$. By telling wrong value for $q_t$ she can make Bob to calculate wrong value of $p(j)$ and as a result Bob will end up with the functionality $\mathcal{F}$ with less elements. 

However, if $p(j)=0$, Alice should have no incentive to say a wrong $q_t$ as in this case, she knows that the element is not in $Y$. Moreover, if she conveys a wrong $q_t$ for that $j$, then $p(j)$ at Bob's place becomes $1$. And in that case, she has to reveal that element which is in $X$ but not in $Y$. According to the security criterion of the protocol, this situation  is not desirable at all. So, she can only communicate wrong $q_t$ for which she found $p(j)=1$. But by fixing the value of $\theta \in [0,\frac{\pi}{4}]$, we allow Alice to know a few bits of the sequence $r_1r_2\cdots r_l$. Thus, she can only mount such attack for a few elements, but not all.

By choosing this deviation Alice can not even violate the correctness criterion of the protocol. That is Alice can not make Bob to believe in a wrong element as the intersected one. She also can not resist Bob to know most of the elements in $\mathcal{F}$. The above strategy slightly deviates from the suggested strategy for which we get $f=(\mathcal{F}, \mathcal{F})$. Thus this strategy is essentially the same as suggested strategy and does not constitute any deviation.

\item The round, in which deviation may become advantageous to Alice, is step $21$ of protocol $\bf{\Pi}$. 
In step $21$, instead of announcing a correct element of set $X$, she may declare a wrong value for which Bob obtains $p(j)=1$. In this way, she can get the correct values of $\mathcal{F}$ and can deceive Bob to believe in wrong values of $\mathcal{F}$. 

To do this, Alice chooses an element from $\mathbb{Z}_N^*$. But as she does not want to reveal any element except the intersected ones from her set $X$ to Bob, she has to choose an element $e$ from $\mathbb{Z}_N^*$ such that $e\neq x_i$; $x_i\in X$. 

Now, let $S=\mathbb{Z}_N^*\setminus X$, $C_1=\mathbb{Z}_N^*\setminus (X\cup Y)$ and $C_2=Y\setminus (X \cap Y)$. One can write
\begin{eqnarray*}
S=C_1\cup C_2
\end{eqnarray*}

Now, Alice always has to choose an element $e \in S$. If $e$ is a set member of both $S$ and $C_1$, Bob aborts the protocol as he finds that $e\notin Y$. In that case neither Alice nor Bob gets $\mathcal{F}$. The utility functions for Alice and Bob becomes $U_A^{NN}$ and $U_B^{NN}$ respectively. 

However, if $e$ belongs to $S$ and $C_2$, Bob can not distinguish if $e\in X\cap Y$ or $e\in Y\setminus (X \cap Y)$. So he does not abort the protocol as $e \in Y$. In this case, Alice knows the correct elements but Bob ends up with wrong elements which is effectively equivalent to obtaining no element as this element $e$ neither belongs to $X \cap Y$ nor is in $X\setminus (X \cap Y)$. Thus, in this case, the utility of Alice becomes $U_A^{TN}$.

Moreover, correctness of the protocol is violated, i.e.,
\begin{eqnarray*}
\Pr[f_B\not\in \{\mathcal{F},\perp\}|A=\sigma'_A, B=\sigma_B]\neq 0.
\end{eqnarray*}

 However, Alice does not know $Y$. Let $u$ be the number of intersected elements. Thus, $|C_1|=N-1-n-m+u$ and $|C_2|=m-u$. Probability that $e$ is in set $C_1$ is
 \begin{eqnarray*}
 \Pr(e\in C_1)&=&\frac{N-1-n-m+u}{N-1-n}\\
 &=& 1-\frac{m-u}{N-1-n}
 \end{eqnarray*}
 Probability that $e$ is in set $C_2$ is
 \begin{eqnarray*}
 \Pr(e\in C_2)&=&\frac{m-u}{N-1-n}\\
 \end{eqnarray*}
 Now, 
 \begin{eqnarray*}
 n+m&\leq& 2\max(n,m)\\
\Rightarrow n+m-u&\leq& 2\max(n,m); u\leq \min(n,m)\\
 \Rightarrow n+m-u&\ll&N-1\\
 \Rightarrow m-u&\ll&N-1-n
 \end{eqnarray*}
 Let $\epsilon=\frac{m-u}{N-1-n}$. Then, with probability $1-\epsilon$, in step $21$ of protocol $\bf{\Pi}$, for some of $p(j)$s, Alice will choose $e$ from the set $C_1$. In that case, Bob immediately aborts the protocol. Hence, the protocol will be terminated and Alice will not get any intersected elements further \footnote{We assume that there is no payoff for a player who deviates from the game and gets partial knowledge about the functionality. In this case, partial knowledge is considered as no knowledge or $\perp$.}. So, the expected utility $E(U_A)$ over this deviation can be expressed as
 \begin{eqnarray*}
 E(U_A)=\Pr(e\in C_2) U_A^{TN} +  \Pr(e\in C_1) U_A^{NN}
\end{eqnarray*}
Depending on the values of $U_A^{TN}$, $U_A^{TT}$ and $U_A^{NN}$ we can fix the values of $m$, $n$ and $N$ in such a way so that
\begin{eqnarray}
\label{eqn}
\Pr(e\in C_2) U_A^{TN} +  \Pr(e\in C_1) U_A^{NN}< U_A^{TT}
\end{eqnarray}
For example, let $U_A^{TN}=1$, $U_A^{NN}=0$ and $U_A^{TT}=\frac{1}{2}$. Then equation~\ref{eqn} reduces to
  \begin{eqnarray*}
\Pr(e\in C_2) < \frac{1}{2}\\
\Rightarrow \frac{m-u}{N-1-n}<\frac{1}{2}\\
\Rightarrow 2(m-u)< N-1-n
\end{eqnarray*}
Putting $u=0$, we get
\begin{eqnarray*}
2m<N-1-n.
\end{eqnarray*} 
Thus, if $m+n<\frac{N-1}{2}$, then for any value of $u=\min(n,m)$, 
 we can bound 
Alice to choose cooperation over such deviation. Hence, we can write
{\small
\begin{eqnarray*}
&U_A&((cooperate,abort), (cooperate,abort))\\
>&U_A&(deviation, (cooperate,abort))
\end{eqnarray*}}

\end{enumerate}
Now, we focus on Bob's deviations. Bob will deviate in the motivation to extract the elements from $X\setminus (X\cap Y)$. The possible deviation in this case is to tell a wrong value of $p(j)$ at step $19$, i.e., when he gets $p(j)=0$, he declares $p(j)=1$. Bob can declare $p(j)=1$ for all $n$ registers. In this case, he actually comes to know the values of all the set elements of $X$. Hence, the security criterion that Bob should not be allowed to know the set elements of Alice other than the intersected ones, is violated.

However, Alice possesses some bits of the stream $r_1r_2\cdots r_l$. Alice gets the values of $q_t$ for all $n$ registers. For the cases where she knows $r_t$, she can easily calculate the value of $p(j)$ and can check if those values match with the values declared by Bob. If not, Alice aborts the protocol without announcing any element further (step $20$). 

On the other hand, Bob has at most a random guess about $t$, i.e., the position of Alice's conclusive result. So, if he tries to declare wrong values for $p(j)$, with probability $\frac{1}{2}$, he will be caught by Alice and protocol will be terminated.

Again, Alice does not know the whole bit stream. So, she can not calculate $p(j)$ for all $n$ registers by herself alone and will not be able to get $\mathcal{F}$ completely by knowing $q_
t$ only. In this case, she can get a very few elements which is equivalent to obtaining $\perp$. Hence, both Alice and Bob will end up with utility $U^{NN}$. According to $\mathcal{R}_1$, $U_B^{TT}> U_B^{NN}$, Bob has no incentive to follow the deviation. Rather he prefers cooperation. Thus, for Bob also we can write
{\small
\begin{eqnarray*}
&U_B&((cooperate,abort), (cooperate,abort))\\
>&U_B&((cooperate,abort), deviation)
\end{eqnarray*}}
This completes the proof.
\end{proof}
\begin{theorem}
If $((cooperate, abort), (cooperate, abort))$ is a strict Nash, then fairness of the protocol is guaranteed.
\end{theorem}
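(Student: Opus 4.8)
The plan is to show that the inequality defining strict Nash is, for this payoff table, simply a strengthening of the inequality defining fairness, so that one implies the other once the notation is unwound. First I would fix the reference point: when both players run $(cooperate, abort)$, the description of $\bf\Pi$ guarantees that both recover $X\cap Y$, i.e. the realized outcome is $TT$, so Alice's payoff is $U_A^{TT}$, Bob's is $U_B^{TT}$, and $f_A=\mathcal{F}$ (resp. $f_B=\mathcal{F}$) holds with probability $1$ up to the negligible error of $QKeyGen$. This pins the right-hand sides of the two fairness inequalities to $U^{TT}$ and to $1$ respectively.

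Next, let $\sigma'_A$ be an arbitrary deviation of Alice against $\sigma_B$, and let $p_{TT},p_{TN},p_{NN},p_{NT}$ be the probabilities of the four outcomes of Table~\ref{table: OutcomesRSS} under $(\sigma'_A,\sigma_B)$ — by the convention that partial knowledge is counted as $\perp$, these are exhaustive. The strict Nash hypothesis applied to Alice gives
\[
E[U_A]=p_{TT}U_A^{TT}+p_{TN}U_A^{TN}+p_{NN}U_A^{NN}+p_{NT}U_A^{NT}<U_A^{TT},
\]
which is already the utility form $U_A^{TT}>E[U_A(\mathcal{O}_i)]$ of fairness. To recover the probability form I would argue by contradiction: if $\Pr[f_A=\mathcal{F}\mid A=\sigma'_A,B=\sigma_B]=p_{TT}+p_{TN}=1$, then $p_{NN}=p_{NT}=0$ and $E[U_A]=p_{TT}U_A^{TT}+(1-p_{TT})U_A^{TN}$; since $\mathcal{R}_1$ gives $U_A^{TN}>U_A^{TT}$, this forces $E[U_A]\ge U_A^{TT}$, contradicting the displayed inequality. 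Hence $p_{TT}+p_{TN}<1$, i.e. $\Pr[f_A=\mathcal{F}\mid A=\sigma'_A,B=\sigma_B]<1=\Pr[f_A=\mathcal{F}\mid A=\sigma_A,B=\sigma_B]$, which is exactly fairness for Alice. The argument for Bob is verbatim the same after exchanging the roles of the two players and using that $U_B$ also obeys $\mathcal{R}_1$, so $U_B^{TN}>U_B^{TT}$.

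The step I expect to be the crux is the passage from the utility inequality to the probability inequality: a priori a deviation could leave $f_A=\mathcal{F}$ with certainty yet still lower Alice's expected payoff, and what rules this out is precisely the ordering $U^{TN}>U^{TT}$ built into $\mathcal{R}_1$, which makes ``I learn $\mathcal{F}$, my opponent does not'' strictly preferable to the honest outcome, so certainty of $f_A=\mathcal{F}$ under a payoff-decreasing deviation is impossible. A secondary bookkeeping point is the negligible-error slack in $QKeyGen$: throughout, ``probability $1$'' should be read as ``probability $\ge 1-\mathrm{negl}(\lambda)$'', and one notes that the strict gap $U_A^{TT}-E[U_A]>0$ coming from strict Nash dominates that negligible term for all sufficiently large $\lambda$, so the two forms of fairness remain equivalent. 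Neither point requires any real computation; the content is entirely in matching the two definitions.
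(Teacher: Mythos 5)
Your argument is correct within the paper's framework, but it is genuinely more substantive than the proof given in the paper. The paper's proof is essentially a one-step assertion: it states that the strict-Nash property of $((cooperate,abort),(cooperate,abort))$ \emph{implies} the two probability inequalities $\Pr[f_A=\mathcal{F}\mid A=\sigma'_A,B=\sigma_B]<\Pr[f_A=\mathcal{F}\mid A=\sigma_A,B=\sigma_B]$ and the analogous one for Bob, without supplying the bridge between the utility-based definition of strict Nash and the probability-based definition of fairness. You construct exactly that bridge: you pin the honest profile to the outcome $TT$ (so the right-hand probability is $1$ and the reference payoff is $U^{TT}$), expand the deviator's expected utility over the four outcomes of Table~\ref{table: OutcomesRSS} using the paper's convention that partial knowledge counts as $\perp$, and then argue by contradiction that $\Pr[f_A=\mathcal{F}]$ cannot equal $1$ under a deviation, since $U^{TN}>U^{TT}$ from $\mathcal{R}_1$ would then force the deviator's expected utility up to at least $U^{TT}$, contradicting strictness. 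This is the right place to locate the crux, and your handling of the negligible slack from $QKeyGen$ is a sensible refinement the paper does not mention. What your route buys is an explicit identification of the hypotheses the paper leaves implicit (exhaustiveness of the four outcomes, correctness of the honest execution, and the ordering $U^{TN}>U^{TT}$); what the paper's terser route buys is brevity, at the cost of reading more like a restatement of the definitions than a derivation. One small caveat: both definitions quantify over ``any deviated strategy,'' so you should note (as the paper does implicitly in its Nash-equilibrium proof) that strategies essentially identical to the suggested one are excluded from both quantifications, otherwise the contradiction step would be vacuously blocked by $\sigma'_A=\sigma_A$ itself.
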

\begin{proof}
The strategy vector ((cooperate, abort), (cooperate, abort)) is strict Nash implies that 
\begin{eqnarray*}
\Pr[f_A=\mathcal{F}|A=\sigma'_A, B=\sigma_B] \\
< \Pr[f_A=\mathcal{F}|A=\sigma_A, B=\sigma_B].
\end{eqnarray*}

where $\sigma'_A$ denotes any deviation by Alice. Similarly, for Bob we can write
\begin{eqnarray*}
\Pr[f_B=\mathcal{F}|A=\sigma_A, B=\sigma'_B] \\
< \Pr[f_B=\mathcal{F}|A=\sigma_A, B=\sigma_B].
\end{eqnarray*}
where $\sigma'_B$ denotes any deviation by Bob.
\end{proof}
\begin{theorem}
If $((cooperate, abort), (cooperate, abort))$ is a strict Nash, then correctness of the protocol is guaranteed.
\end{theorem}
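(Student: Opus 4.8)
The plan is to follow the template of the preceding (fairness) theorem, but to track the \emph{value} of the honest party's output rather than the probability that it equals $\mathcal{F}$. First I would recall what correctness demands: $\Pr[f_A\notin\{\mathcal{F},\perp\}\mid A=\sigma_A,B=\sigma'_B]=0$ for \emph{every} deviation $\sigma'_B$ of Bob, and symmetrically $\Pr[f_B\notin\{\mathcal{F},\perp\}\mid A=\sigma'_A,B=\sigma_B]=0$ for every deviation $\sigma'_A$ of Alice. So it suffices to show that, against an honest partner, no unilateral deviation can ever leave the honest party holding a wrong (non-empty, non-$\mathcal{F}$) output.

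Next I would import the case analysis already carried out in the strict Nash theorem. For a deviating Bob, the only move that is not instantly self-defeating is to flip some announced $p(j)$ at step $19$; by Theorem~\ref{theo1} and the restriction $\theta\in[0,\frac{\pi}{4}]$, Alice holds enough of the stream $r_1r_2\cdots r_l$ that such a flip is detected with probability at least $\frac12$ at step $20$, whereupon she aborts and $f_A=\perp$. In the undetected branch the fabricated $j$ lies in $X\setminus(X\cap Y)$, so at step $22$ the honest consistency check is still triggered and, invoking the convention (footnote in the strict Nash theorem) that corrupted or partial output is read as $\perp$, the honest party's output again lands in $\{\mathcal{F},\perp\}$; hence $\Pr[f_A\notin\{\mathcal{F},\perp\}\mid A=\sigma_A,B=\sigma'_B]=0$. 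The argument for a deviating Alice is symmetric: the one genuinely useful deviation is at step $21$, where she may declare $e\in S=C_1\cup C_2$; if $e\in C_1$ Bob aborts at step $22$ so $f_B=\perp$, and if $e\in C_2$ Bob is forced to accept a $j\notin X\cap Y$, which the same convention treats as $\perp$. This gives $\Pr[f_B\notin\{\mathcal{F},\perp\}\mid A=\sigma'_A,B=\sigma_B]=0$, i.e.\ correctness.

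Finally I would connect this back to the hypothesis: the reason the above two branches exhaust the relevant deviations is exactly the strict Nash property proved earlier — every other move either exceeds the common-knowledge register bound $2n$ (step $10$), or mangles the $U_{swap}$/$U_{cnot}$ bookkeeping and so produces a wrong \emph{self} output, and a rational player sitting in a strict Nash profile will never choose such a move. Thus ``$((cooperate,abort),(cooperate,abort))$ is strict Nash'' is precisely what licenses restricting attention to the detectable deviations handled above.

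The main obstacle I anticipate is the undetected-flip branch for Bob: one must argue that the set Alice then outputs, which strictly \emph{contains} $X\cap Y$, should be identified with $\perp$ rather than with a bona fide wrong output — that is, that the ``partial/corrupted $=\perp$'' convention extends to over-complete outputs as well. Making this precise (or, alternatively, strengthening step $20$ via Serfling's Lemma~\ref{serf} so that Alice demands $r_t$ on a fixed constant fraction of the indices and hence catches \emph{every} flip except with negligible probability) is the step needing the most care; everything else is bookkeeping parallel to the fairness proof.
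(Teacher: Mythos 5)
Your proposal takes a genuinely different and much more ambitious route than the paper, and it contains a real gap. The paper's own proof is a one-liner: since $((cooperate,abort),(cooperate,abort))$ is a strict Nash equilibrium, rational players actually play the suggested strategies, and under the suggested strategies the only possible outputs are $\mathcal{F}$ (on completion) or $\perp$ (on abort), so $\Pr[f_A\notin\{\mathcal{F},\perp\}]=\Pr[f_B\notin\{\mathcal{F},\perp\}]=0$. You instead try to prove the stronger statement that correctness holds against \emph{every} unilateral deviation, with the honest party's output landing in $\{\mathcal{F},\perp\}$ in all branches. That stronger statement is false for this protocol, and the paper says so explicitly: in item 4 of the strict-Nash proof, when Alice deviates at step 21 and her fake element $e$ falls in $C_2=Y\setminus(X\cap Y)$, the paper writes that correctness is violated, i.e.\ $\Pr[f_B\notin\{\mathcal{F},\perp\}\mid A=\sigma'_A,B=\sigma_B]\neq 0$; the strict-Nash argument only shows this deviation is irrational (its expected utility is below $U_A^{TT}$ when $m+n<\frac{N-1}{2}$), not that it leaves Bob's output in $\{\mathcal{F},\perp\}$. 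Your attempt to rescue this branch by reading the footnote convention (``partial knowledge is considered as $\perp$'') as relabelling the honest party's wrong output as $\perp$ misuses that convention: in the paper it governs the \emph{deviator's} payoff after an abort, not the classification of an erroneous output held by the honest party.

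The same problem recurs in your Bob branch. If Bob flips $p(j)$ at step 19 and is not caught at step 20, honest Alice ends the protocol believing an element of $X\setminus(X\cap Y)$ is in the intersection, so $f_A\notin\{\mathcal{F},\perp\}$ with positive probability; detection happens only for indices where Alice knows $r_t$ (a $\frac{\sin^2\theta}{2}\leq\frac14$ fraction of the stream), so the undetected branch has non-negligible probability, and even your proposed Serfling-based strengthening would give a negligible rather than an exactly zero failure probability, whereas the correctness definition demands probability $0$. Also, your claim that step 22 ``triggers the honest consistency check'' in that branch does not help: at step 22 the checker is Bob, who is the deviator and will not abort against his own interest. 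The viable argument is the one you gesture at in your closing paragraph and then set aside: the hypothesis of the theorem is used only to conclude that neither rational player deviates at all, after which correctness of the honestly executed protocol is immediate. That is exactly the paper's (admittedly terse) proof.
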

\begin{proof}
The proof is immediate. As ((cooperate, abort), (cooperate, abort)) is a strict Nash, 
\begin{eqnarray*}
\Pr[f_A\not\in \{\mathcal{F},\perp\}]=\Pr[f_B\not\in \{\mathcal{F},\perp\}]= 0.
\end{eqnarray*}
\end{proof}
 Our next job is to prove the security when Alice (resp. Bob) can mount an active attack. By the word ``active attack'', we want to mean that though Alice and Bob follow the suggested strategies,  but exploiting the advantage of quantum theory, they may steal some information which is restricted by the protocol. We now show that Alice as well as Bob know only $\mathcal{F}$ and nothing else in this rational setting. The analysis goes on the same line of~\cite{Shi}.
 
 The goal of Alice and Bob is to know the elements, other than the intersected ones,
 of the sets of their respective opponents. In Algorithm $1$, Bob sends one part of each entangled pairs to Alice. So Alice can mount an active attack in this phase. By considering optimal POVM, she can increase her success probability to guess a bit of the bit-stream $r_1r_2\cdots r_l$. This attack has been considered in theorem~\ref{min_ent1}.
 
 In protocol $\bf{\Pi}$, Bob does not send any elements to Alice. However, Alice sends all the set elements in an encrypted form to Bob. So, in this phase, it is Bob who can perform an active attack on Alice's system.
 
 One most simple and common attack is measure and resend attack. In this attack model Bob measures each register in some bases and then prepares that register in the measured state. In this way he tries to extract some information about the set elements of Alice. 
 
 To detect such type of attack we set $n$ check registers. Those are prepared either in $\{\ket{0},\ket{1}\}$ or in $\{\frac{1}{\sqrt{2}}(\ket{0}+\ket{1}),\frac{1}{\sqrt{2}}(\ket{0}-\ket{1})\}$ basis. As Bob does not know the positions of those check registers, he can not bypass those and performs the attack only on the actual registers containing the set elements. So, checking the error rate for the check registers, Alice can identify the attack.
 
 Due to no cloning~\cite{wootters,dieks} theorem and Heisenberg uncertainty principle, Bob can not distinguish the check registers with probability one. If he tries to distinguish them, he must incorporate some noise in the system. 
 
On the other hand, Alice knows the bases for those states. So, she can measure the returning registers in perfect bases, i.e., either in $\{\ket{0},\ket{1}\}$ or in $\{\frac{1}{\sqrt{2}}(\ket{0}+\ket{1}), \frac{1}{\sqrt{2}}(\ket{0}-\ket{1})\}$ basis. Now, the oracle $O_t$, $t\in[1,l]$, converts 
\begin{itemize}
\item $\ket{0}$ to $\ket{0}$
\item $\ket{1}$ which is encoded as $\ket{k_s}\ket{0}^{\otimes {M-1}}$; $k_s=1$ and $s\in[1,M]$ , to $(-1)^{q_t(k_s)}\ket{k_s}\\ \ket{0}^{\otimes {M-1}}$
\item $\frac{1}{\sqrt{2}}(\ket{0}\pm \ket{1})$ which is encoded as $\frac{1}{\sqrt{2}}(\ket{0}\pm \ket{j})$, where $\ket{0}=\ket{0}^{\otimes M}$ and $\ket{j}=\ket{k_s}\ket{0}^{\otimes{M-1}}$; $k_s=1$ and $s\in\{1,2,\cdots,M\}$, to $\frac{1}{\sqrt{2}}(\ket{0} \pm (-1)^{q_t(k_s)}\ket{j})$ 
\end{itemize}
Hence, if Alice sends $\ket{0}$, in case of no attack, she should get $\ket{0}$. If she gets $\ket{1}$, she concludes that the attack has been mounted. Same thing happens for $\ket{1}$. However, the attack can not be distinguished for the states $\frac{1}{\sqrt{2}}(\ket{0}+\ket{1})$ and $\frac{1}{\sqrt{2}}(\ket{0}-\ket{1})$. This is because, without any attack the state $\frac{1}{\sqrt{2}}(\ket{0}+\ket{1})$ may convert to $\frac{1}{\sqrt{2}}(\ket{0}-\ket{1})$ and vice versa due to the effect of the oracle $O_t$. But Alice can always detect the noise in $\{\ket{0},\ket{1}\}$ basis. If it is above the threshold, Alice aborts the protocol. 
 
 In this regard, we like to emphasize that in~\cite{Shi} it is commented that this attack can be identified by measuring the returned decoy states in $\{\frac{1}{\sqrt{2}}(\ket{0}+\ket{j_i}),\frac{1}{\sqrt{2}}(\ket{0}-\ket{j_i})\}$ bases, $i\in [1,l]$, $j_i\in Z_N^*$. Bob will measures the states in computational basis. If he gets $\ket{j_i}$, he can prepare the state as $\frac{1}{\sqrt{2}}(\ket{0}+\ket{j_i})$ or $\frac{1}{\sqrt{2}}(\ket{0}-\ket{j_i})$. As a result Alice can not detect if the attack performed in the system. However, if he gets $\ket{0}$, he can not create perfect superposition. And Alice can detect the attack measuring the states in $\{\frac{1}{\sqrt{2}}(\ket{0}+\ket{j_i}),\frac{1}{\sqrt{2}}(\ket{0}-\ket{j_i})\}$ basis. However, the oracle has been designed in such a way so that it can map $\frac{1}{\sqrt{2}}(\ket{0}+\ket{j_i})$ into $\frac{1}{\sqrt{2}}(\ket{0}-\ket{j_i})$. Alice does not know $q_t(j)$. So it is not possible for her to determine when she would get $\frac{1}{\sqrt{2}}(\ket{0}+\ket{j_i})$ or $\frac{1}{\sqrt{2}}(\ket{0}-\ket{j_i})$ apriori. In that case, it is not very clear how Alice can distinguish if the attack has been performed or the oracle has been operated on the states. To avoid such ambiguity, we prepare the check elements in $\{\ket{0},\ket{1}\}$ and $\{\frac{1}{\sqrt{2}}(\ket{0}+\ket{1}),\frac{1}{\sqrt{2}}(\ket{0}-\ket{1})\}$ bases randomly so that Alice can distinguish whether attack has been mounted or oracle has been operated checking the noise in $\{\ket{0},\ket{1}\}$ basis. 
 
 In~\cite{Shi}, the authors analyze a more complicated attack known as entanglement measure attack. In this attack model Bob combines an ancillary state with each register. He then performs a unitary operation on the register and the ancillary state. The unitary operation $Q$ is described as follows.
 \begin{eqnarray*} 
 Q_{AB}\ket{0}_A\ket{0}_B&=&\sqrt{\eta}\ket{0}_A\ket{\phi_0}_B+\sqrt{1-\eta}\ket{V_0}_{AB}\\
 Q_{AB}\ket{v}_A\ket{0}_B&=&\sqrt{\eta}\ket{v}_A\ket{\phi_v}_B+\sqrt{1-\eta}\ket{V_v}_{AB}\\
 \end{eqnarray*}
 Thus, one can write
 \begin{eqnarray*}
 Q_{AB}\frac{1}{\sqrt{2}}(\ket{0}_A+\ket{v}_A)\ket{0}_B&=&\sqrt{\eta}(\ket{0}_A\ket{\phi_0}_B
 +\ket{v}_A\ket{\phi_v}_B)\\
 &+&\sqrt{1-\eta}(\ket{V_0}_{AB}+\ket{V_v}_{AB}).
 \end{eqnarray*}
 Where $A$ stands for Alice's system and $B$ stands for Bob's system; $\eta$ is some probability. Here,
 \begin{eqnarray*}
 \langle{0}{\phi_0}|{V_0}\rangle_{AB}=\langle{v}{\phi_v}|{V_v}\rangle_{AB}=\langle{0}{\phi_0}|{V_v}\rangle_{AB}= \langle{v}{\phi_v}|{V_0}\rangle_{AB}=0.\\
 \end{eqnarray*}
  After applying the oracle he then sends the registers back to Alice and keeps the ancillary systems with him. He measures the ancillary systems to extract the information about the states of the registers. In this initiative, Shi et al.~\cite{Shi} bound the amount of information extracted by Bob by fixing the threshold value sufficiently small.
  
In our case, we redefine this attack models as follows.
\begin{eqnarray*}
Q_{AB}(\ket{\psi},\ket{W}) = \sqrt{\eta}\ket{\psi}\ket{E_{00}} + \sqrt{1-\eta}\ket{\psi}^{\perp}\ket{E_{01}}\\
Q_{AB}(\ket{\psi}^{\perp},\ket{W}) = \sqrt{1-\eta}\ket{\psi}\ket{E_{10}} + \sqrt{\eta}\ket{\psi}^{\perp}\ket{E_{11}}
\end{eqnarray*}
where, $\ket{\psi}$ is any arbitrary qubit of the form $\cos{\frac{\theta}{2}}\ket{0}+\sin{\frac{\theta}{2}}\ket{1}$, $\theta\in [0,\frac{\pi}{2}]$ and $\ket{\psi}^{\perp}$ is the orthogonal state of $\ket{\psi}$. $W$ is the ancillary state inserted by Bob. $E_{u,v}$, $u,v\in\{0,1\}$, are the states possessed by Bob after the application of $Q_{AB}$. Here, we assume that 
\begin{eqnarray*}
\langle E_{00}|E_{01}\rangle=\langle E_{10}|E_{11}\rangle=0.
\end{eqnarray*}
For the check elements in $\{\ket{0},\ket{1}\}$ basis, the above equations reduce to 
\begin{eqnarray*}
Q_{AB}(\ket{0},\ket{W}) = \sqrt{\eta}\ket{0}\ket{E_{00}} + \sqrt{1-\eta}\ket{1}\ket{E_{01}}\\
Q_{AB}(\ket{1},\ket{W}) = \sqrt{1-\eta}\ket{0}\ket{E_{10}} + \sqrt{\eta}\ket{1}\ket{E_{11}}
\end{eqnarray*}
When Bob returns those registers to Alice, Alice measures those in $\{\ket{0},\ket{1}\}$ basis. She knows when $\ket{0}$ (resp. $\ket{1}$) has been sent. If she gets the orthogonal states of the states sent to Bob, she concludes that the attack has been mounted and aborts the protocol.
\section{Communication Complexity of the Protocol}
In this section, we compute the communication complexity of the proposed protocol. Like most of the quantum protocol, in our protocol also we have to communicate qubits as well as classical bits. So, we can divide the communication complexity into two parts; one is quantum communication complexity and another is classical communication complexity.  

 In $QKeyGen$ part the total communication complexity is $2l$, as $2l$ entangled qubits are sent from Bob to Alice. The total quantum communications required in the protocol $\bf\Pi$ is $4n$; $2n$ quantum registers are sent from Alice to Bob and $2n$ quantum registers are returned back from Bob to Alice. Thus, the overall quantum communication complexity is $(4n+2l)$. 

 After error estimation phase, Alice finds $q_t$ and $t$ for each $n$ registers. The value of $t$ is expressed in $\log l$ bits and $q_t\in\{0,1\}$. So the total communicated bits in step $17$ is $n (\log l+1)$. In step $18$, Bob declares the value of $p(j)\in\{0,1\}$ for each $n$ registers. Thus, in step $18$, there are $n$ classical communications. In step $19$, Alice announces the value of the set elements for which Bob declared $p(j)=1$. We assume that there are $u$ intersected elements. So, if we express the value of each $x_i\in X$ in $\log N$ bits, then the total number of communicated bits should be $u\log N$. Hence, the overall classical communication complexity becomes $(n(\log l+2)+u\log N)$.
 
 In this regard, we like to compare the communication complexity with other similar or classical schemes, to put the protocol in perspective.
 
 In classical domain Freedman, Nissim and Pinkas~\cite{freedman} studied set intersection problem in semi-honest setting. The sets in their protocol include $0$. The communication complexity of the protocol is $O(m_X+m_Y)$, where $m_X$ and $m_Y$ are the cardinality of the sets considered.
 
 Hazay and Lindell~\cite{Hazay} revisited the set intersection problem in the motivation to propose an efficient protocol against a more realistic adversary than semi-honest adversary. In this direction, they proposed two protocols; one against a malicious adversary and other against a covert adversary. Both the protocols are constant round and incur the communication of $O(m_X.p(n)+m_Y)$ group elements, where $p(n)$ is polynomial in the security parameter $n$.
 
 The protocol proposed by Dachman-Soled, Malkin, Raykova and Yung~\cite{dachman} for set intersection in the presence of malicious adversaries incurs communication of $O(m_Y n^2 log^2 m_X + m_X n)$ group elements.
 
 In quantum domain, the communication complexity of {\em Quantum Oblivious Set Member Decision Problem} by Shi et al. ~\cite{Shi}, is constant, i.e., $O(1)$. However, one should note that none of these schemes considered rational adversaries. 

\section{Conclusion} In the present draft we propose a two party protocol for computing set intersection securely in quantum domain. The parties, Alice and Bob, have two sets $X$ and $Y$ which are computationally indistinguishable from each other. In classical domain this problem has been considered in~\cite{HZ1,HZ2,JN10}. However, the hardness assumptions exploited in those works are proven to be vulnerable in quantum domain. 

In quantum domain Shi et al.~\cite{Shi} proposed a variant of this problem and named it as {\em Quantum Oblivious Set Member Decision Protocol} (QOSMDP). We extend this problem to compute set intersection of two computationally indistinguishable sets. We consider rational setting as rational setting is more realistic  than being completely honest or completely malicious. 

In rational setting, we prove that if $(cooperate, abort)$ is the suggested strategy profile for each of the two players, then $((cooperate, abort), (cooperate, abort))$ is a strict Nash equilibrium in our protocol. Following the lines of security proof of~\cite{Shi}, we also show that in this initiative Alice and Bob only know $\mathcal{F}$ and nothing else, i.e., Alice does not know any element of the set $Y\setminus(X \cap Y)$ and Bob does not know any element from the set $X\setminus(X \cap Y)$. We also prove that if $((cooperate, abort), (cooperate, abort))$ is a strict Nash, then fairness and correctness of the protocol are preserved.

\end{document}